\documentclass[groupedaddress, aps, longbibliography, reprint, prx]{revtex4-1}

\usepackage{graphicx}
\usepackage{bm}
\usepackage{bbold}
\usepackage{physics}
\usepackage{xcolor}
\usepackage{enumitem}
\usepackage{amsmath, amssymb, amsthm}
\usepackage[normalem]{ulem}
\usepackage{natbib}
\usepackage{comment}

\newtheorem{theorem}{Theorem}

\newtheorem{fact}{Fact}[section]
\newtheorem{lemma}{Lemma}

\newtheorem{definition}{Definition}

\usepackage[colorlinks=true]{hyperref}
\hypersetup{citecolor = blue, linkcolor=blue, urlcolor=blue} 

\newcommand{\Wg}{\mathsf{Wg}}
\renewcommand{\(}{\left(}
\renewcommand{\)}{\right)}
\renewcommand{\[}{\left[}
\renewcommand{\]}{\right]}

\newcommand{\tracenorm}[1]{\left \| #1  \right \|_{\rm tr}}

\newcommand{\opnorm}[1]{\left \| #1  \right \|_{\rm op}}

\begin{document}

\title{Hardness of observing strong-to-weak symmetry breaking}


\author{Xiaozhou Feng}
\thanks{These two authors contributed equally.} \affiliation{Department of Physics, The University of Texas at Austin, Austin, TX 78712, USA}

\author{Zihan Cheng}
\thanks{These two authors contributed equally.}
\affiliation{Department of Physics, The University of Texas at Austin, Austin, TX 78712, USA}

\author{Matteo Ippoliti}
\affiliation{Department of Physics, The University of Texas at Austin, Austin, TX 78712, USA}

\begin{abstract}
Spontaneous symmetry breaking (SSB) is the cornerstone of our understanding of quantum phases of matter. Recent works have generalized this concept to the domain of mixed states in open quantum systems, where symmetries can be realized in two distinct ways dubbed {\it strong} and {\it weak}. Novel intrinsically mixed phases of quantum matter can then be defined by the spontaneous breaking of strong symmetry down to weak symmetry. However, proposed order parameters for strong-to-weak SSB (based on mixed-state fidelities or purities) seem to require exponentially many copies of the state, raising the question: is it possible to {\it efficiently} detect strong-to-weak SSB in general? Here we answer this question negatively in the paradigmatic cases of $\mathbb{Z}_2$ and $U(1)$ symmetries. We construct ensembles of pseudorandom mixed states that do not break the strong symmetry, yet are computationally indistinguishable from states that do. This rules out the existence of efficient state-agnostic protocols to detect strong-to-weak SSB. 
\end{abstract}

\maketitle

{\it Introduction.}---Quantum phases of matter are best understood for ground states of isolated quantum systems, described by pure states~\cite{Wen_zoo_2017}. Their robustness to perturbations that cause the state to become mixed (e.g. finite temperature, decoherence) is an interesting fundamental question~\cite{Dennis_topological_2002,Hastings_topological_2011,Lu_detecting_2020,Fan_diagonstics_2024}. Recent works have pointed out another interesting possibility: the existence of {\it intrinsically mixed} quantum phases of matter, exhibiting patterns of symmetry breaking without a pure-state analogue~\cite{Lee_Quantum_2023,Lee_Symmetryprotected_2025,Lessa_strong_2025,Wang_Intrinsic_2025,Ellison_Toward_2025,Sohal_Noisy_2025,Zhang_strong_2025,Chen_separability_2024, Ma_topologicalphasesaveragesymmetries_2025, Ma_symmetry_2025}.

This possibility arises because mixed states can realize a symmetry in two physically distinct ways: an {\it exact} or {\it strong} symmetry, and an {\it average} or {\it weak} symmetry. 
Precisely, for a unitary $U$ representing the action of a given symmetry on the Hilbert space, a mixed state $\rho$ has a strong symmetry if  $U\rho \propto\rho$, and a weak symmetry if $U\rho U^\dagger = \rho$. 
Both conditions require that $\rho$ can be unraveled into a mixture of symmetric pure states, however strong symmetry also requires all states in the mixture to have the same symmetry charge. 
This richer structure of symmetries enables a route to SSB that is unique to mixed states, known as strong-to-weak spontaneous symmetry breaking (SWSSB). Many intriguing properties of this phenomenon have been recently investigated in quantum systems at finite temperature or under the effect of local decoherence channels~\cite{Lee_Quantum_2023,Lee_Symmetryprotected_2025,Lessa_strong_2025,Wang_Intrinsic_2025,Ellison_Toward_2025,Sohal_Noisy_2025,Zhang_strong_2025,Chen_separability_2024, Ma_topologicalphasesaveragesymmetries_2025, Ma_symmetry_2025,Kuno_strong_2024,Gu_spontaneoussymmetrybreakingopen_2024,zhang_fluctuation_2024,Huang_Hydrodynamics_2025}.

In pure states, SSB can be characterized by long range order in the correlation function $\langle O_iO_j^\dagger\rangle$, where $O_i$ and $O_j$ are local operators that are charged under the global symmetry~\cite{Sachdev_2011}. 
In contrast, known diagnostics of SWSSB do not take the form of local expectation values; they are instead information-theoretic functions that are nonlinear in the density matrix $\rho$. 
Such diagnostics include the R\'enyi-2 correlator~\cite{Lee_Quantum_2023,Ma_symmetry_2025,Sala_spontaneous_2024}, the fidelity correlator~\cite{Lessa_strong_2025,zhang_fluctuation_2024}, and the R\'enyi-1 (or Wightman) correlator~\cite{weinstein_efficient_2024,Liu_diagnosingstrongtoweaksymmetrybreaking_2024}. 
These present different advantages and disadvantages in terms of analytical tractability, robustness to strongly-symmetric perturbations, etc, but they share a common limitation in terms of experimental access. Indeed, while nonlinearity in $\rho$ is not an issue in and of itself (it can be handled by protocols such as classical shadows~\cite{huang_predicting_2020,elben_randomized_2023,sun2025schemedetectstrongtoweaksymmetry}), each proposed diagnostic involves the measurement of quantities like mixed-state fidelity and purity which cannot be carried out efficiently in general. Thus, despite specific recent proposals \cite{weinstein_efficient_2024,sun2025schemedetectstrongtoweaksymmetry}, the experimental observability of SWSSB in general remains an open question.

In this work we conclusively answer this question. We establish that, given copies of a strongly-symmetric density matrix $\rho$ and no additional information, {\it no efficient protocol} can decide whether or not $\rho$ exhibits SWSSB in general. 
We achieve this result by leveraging recent constructions of pseudorandom unitaries (PRUs)~\cite{ji_pseudorandom_2018,metger_simple_2024,ma2024constructrandomunitaries,chen_pseudorandom_2024} and pseudorandom density matrices~\cite{Bansal_pseudorandom_2025}. Specifically, we construct ensembles of mixed states that do not exhibit SWSSB, yet are computationally indistinguishable from states that do. ``Computationally indistinguishable'' means that the two ensembles cannot be distinguished by any protocol using resources (time, number of state copies) scaling polynomially in the number of qubits $N$. Since an efficient state-agnostic protocol to detect SWSSB would be able to distinguish the two ensembles, such a protocol cannot exist. 
Our work thus establishes a fundamental limitation on the observability of intrinsicallly mixed phases of matter.

In the following we present the general idea of ``pseudo-SWSSB'' state ensembles and their construction for the paradigmatic cases of $\mathbb{Z}_2$ and $U(1)$ symmetries. 

{\it Pseudo-SWSSB.}---SSB in pure states is diagnosed by long-ranged correlations $C(i,j) = {\rm Tr}(\rho O_i O_j^\dagger)$, with $O_i$ a local operator charged under the symmetry group $G$. Such correlations are efficiently measurable in experiment. In contrast, SWSSB in mixed states is described by long-ranged correlations of different types, which may not be easily accessed experimentally. We review three commonly used diagnostics.
The R\'enyi-2 correlator, 
$R_2(i,j):={\Tr\(\rho O_iO_j^\dagger\rho O_i^\dagger O_j\)} / {\Tr\rho^2}$, 
requires precise measurement of the purity $\Tr\rho^2$, which can be exponentially small leading to an exponentially large sample complexity. 
The fidelity correlator, defined by
$F(i,j):=\Tr\sqrt{\sqrt{\rho}O_iO_j^\dagger\rho O_i^\dagger O_j\sqrt{\rho}}$,
is the fidelity between the two mixed states $\rho$ and $O_iO_j^\dagger\rho O_i^\dagger O_j$, whose measurement in general requires exponentially-costly tomography~\cite{Yuen_improvedsample_2023} (recent work presents a series expansion in terms of R\'enyi-$2n$ correlators~\cite{zhang_fluctuation_2024},  which is at least as hard as measuring $R_2$).
Finally, the R\'enyi-1 or Wightman correlator $R_1(i,j):=\Tr\(\sqrt{\rho} O_iO_j^\dagger\sqrt{\rho} O_i^\dagger O_j\)$
can be efficiently measured with knowledge of how to prepare the ``canonical purification'' state $\sqrt{\rho}$ \cite{weinstein_efficient_2024}, but in the absence of this information, requires state tomography similarly to $F$. 

We say that a state $\rho$ exhibits SWSSB if it is strongly symmetric ($U\rho \propto \rho$) and spontaneously breaks the stron symmetry but not the weak one. 
The strong symmetry is broken if the R\'enyi-1 correlator $R_1(i,j)$ remains finite in the limit of large spatial separation between the points $i$ and $j$.  
Similarly, the weak symmetry is unbroken if the standard correlator $C(i,j) = {\rm Tr}(\rho O_i O_j^\dagger)$ vanishes with increasing distance. 
We note that $R_1$ and $F$ are equivalent for this purpose~\cite{Liu_diagnosingstrongtoweaksymmetrybreaking_2024,weinstein_efficient_2024}. On the other hand $R_2$ does not enjoy a stability theorem with respect to strongly symmetric quantum channels~\cite{Lessa_strong_2025}, and so may not identify mixed state phases correctly. 

With this background, we can now give a definition of pseudo-SWSSB ensembles of mixed states.
\begin{definition}[Pseudo-SWSSB] \label{def:pseudoSWSSB}
    For a given symmetry group $G$, an ensemble of $N$-qubit density matrices $\mathcal{E}(G)$ exhibits ``pseudo-SWSSB" under $G$ if it satisfies the following properties:
    (i) each state $\rho \in \mathcal{E}(G)$ is efficiently preparable;
    (ii) each $\rho \in \mathcal{E}(G)$ is strongly symmetric under $G$;
    (iii) the strong $G$-symmetry is unbroken: $\sum_{i,j} R_1(i,j) = o(N^2)$ with high probability over $\rho\sim \mathcal{E}(G)$ as $N\to\infty$; 
    (iv) the ensemble $\mathcal{E}(G)$ is computationally indistinguishable from a state $\rho_g$ that exhibits SWSSB: for any number of copies $k={\rm poly}(N)$ and any efficient quantum algorithm $\mathcal{A}$,
    \begin{align}
        \left|\mathcal{A}\(\rho^{\otimes k}_g\)-\mathcal{A}\left({\mathbb{E}}_{\rho\sim \mathcal{E}(G)} \rho^{\otimes k} \right) \right|\leq o(1/{\rm poly}(N)).
    \end{align}
\end{definition}
Here a quantum algorithm is a binary measurement with outcomes $0,1$, i.e., $\mathcal{A}(\rho_g^{\otimes k}) = {\rm Tr}(E\rho_g^{\otimes k})$ with $0\leq E \leq \mathbb{I}$ an operator on $k$ Hilbert space copies. It is efficient if it can be implemented in ${\rm poly}(N)$ depth with ${\rm poly}(N)$ auxiliary qubits.

Let us remark on the meaning of Definition~\ref{def:pseudoSWSSB}. If there was an efficient algorithm $\mathcal{A}$ capable of diagnosing SWSSB given polynomially many copies of an unknown state $\rho$, then one could use $\mathcal{A}$ to distinguish the ensemble $\mathcal{E}(G)$ from $\rho_g^{\otimes k}$---which contradicts the definition of $\mathcal{E}(G)$. 
Thus the existence of a pseudo-SWSSB ensemble would rule out such an algorithm, establishing a fundamental limitation on the observability of SWSSB. 
Specifically, it would imply that protocols to detect SWSSB in arbitrary unknown states necessarily require a superpolynomial amount of resources (e.g. state copies). Conversely, efficient protocols to detect SWSSB necessarily require additional assumptions or prior information about the state $\rho$. A protocol in the latter category was proposed in Ref.~\cite{weinstein_efficient_2024}.

Another interesting observation on Definition~\ref{def:pseudoSWSSB} is that, in point (iii), we do not have to separately ask for the weak symmetry to be unbroken in $\mathcal{E}(G)$: this follows from property (iv) if we pick the efficient algorithm $\mathcal{A}$ that measures $O_i O_j^\dagger \otimes O_i^\dagger O_j$ on $k=2$ state copies, returning $|C(i,j)|^2$. Then, since $|C(i,j)|^2$ is negligibly small in $\rho_g$, the same must be true on average over $\mathcal{E}(G)$, bounding both mean and variance of $C(i,j)$ over the ensemble and showing that the weak symmetry is unbroken. 
This highlights a key distinction between strong and weak symmetry: 
it is impossible to hide SSB for the weak symmetry because the order parameter is a polynomial in $\rho$; conversely, since the order parameters for the strong symmetry are {\it not} polynomials in $\rho$, the possibility of concealing SSB exists.

In the following we present rigorous constructions of pseudo-SWSSB ensembles for $\mathbb{Z}_2$ symmetry and $U(1)$ symmetry, thus establishing the hardness of detecting SWSSB in these paradigmatic cases. 
Generalization to the less explored case of non-Abelian symmetry is an interesting direction for future work. 


{\it $\mathbb{Z}_2$ symmetry.}---We start by considering an $N$-qubit Hilbert space with a $\mathbb{Z}_2$ symmetry generated by $\bar{X}=\prod_iX_i$, with local charged operators $Z_i$ ($\bar{X} Z_i \bar{X} = -Z_i$). 
The prototypical SWSSB state in this case is $\rho_0=(\mathbb{I}+\bar{X})/d$, where $d=2^N$ is the Hilbert space dimension. This is the maximally mixed state in the subspace $\mathcal{H}_+$, the $+1$ eigenspace of $\bar{X}$. 
It is easy to verify that $\bar{X} \rho_0 = \rho_0$ (strong symmetry) and ${\rm Tr}(\rho_0 Z_i Z_j) = 0$ for all $i\neq j$ (unbroken weak symmetry). 
At the same time, since $Z_i Z_j \rho_0 Z_i Z_j = \rho_0$, we have $R_1(i,j) = F(i,j) = R_2(i,j) = 1$ for all $i$ and $j$ (broken strong symmetry). 

Before presenting our rigorous construction of the pseudo-SWSSB ensemble, let us qualitatively sketch the main ideas.
Consider a stochastic mixture of some number $r$ of independent Haar-random states inside symmetry sector $\mathcal{H}_+$: $\rho = r^{-1}\sum_{\alpha=1}^r \ketbra{\psi_\alpha}$, where $\bar{X} \ket{\psi_\alpha} = +\ket{\psi_\alpha}$ for all $\alpha$. Since any two states in the mixture are nearly orthogonal with high probability (if $r$ is not too big), this state has purity ${\rm Tr}(\rho^2) \approx 1/r$. The R\'enyi-2 correlator then reads $R_2(i,j) \simeq r^{-1} \sum_{\alpha,\beta} |\bra{\psi_\alpha} Z_i Z_j \ket{\psi_\beta}|^2$. Since the $\ket{\psi_\alpha}$ states are Haar-random, the typical value of each matrix element will be $|\langle \psi_\alpha | \psi_\beta \rangle |^2 = O(1/d)$, and $R_2(i,j) = O(r^2/d)$ for all $i\neq j$. If the chosen rank $r$ grows sub-exponentially in $N$, then $\rho$ does not break the strong symmetry, and so does not have SWSSB (at least according to $R_2$). Yet if $r$ is large enough, $\rho$ is a highly mixed state in a randomized basis of $\mathcal{H}_+$, and so we may expect it to be hard to distinguish from $\rho_0$, the maximally mixed state in the subspace, which does have SWSSB. 
Below we develop this idea into a derandomized construction that is efficiently implementable, and we prove absence of SWSSB according to $R_1$ as well as computational indistinguishability from the SWSSB state $\rho_0$. 

To derandomize our construction, it is essential to make use of {\it pseudorandom unitaries}:
\begin{definition}[Pseudorandom unitary~\cite{ji_pseudorandom_2018,metger_simple_2024,ma2024constructrandomunitaries,chen_pseudorandom_2024}]
    A family of ensembles $\{ \mathcal{U}_{{\rm PRU},N} \}_{N\in\mathbb{N}}$ of $N$-qubit unitaries is a PRU if
    (i) each $U \in \mathcal{U}_{{\rm PRU},N}$ is efficiently implemented in depth ${\rm poly}(N)$, 
    and 
    (ii) a unitary chosen uniformly at random from $\mathcal{U}_{{\rm PRU},N}$ is computationally indistinguishable from a Haar-random unitary: 
    any efficient algorithm querying $U$ in parallel $k = {\rm poly}(N)$ times gives the same average outcome whether $U$ is a PRU or a Haar-random unitary, up to error $o(1/{\rm poly}(N))$. 
\end{definition}
In particular we will use the recently introduced `PFC' ensemble~\cite{metger_simple_2024}, where each $U$ is the composition of a pseudorandom permutation of the computational basis~\cite{zhandry_note_2025}, a pseudorandom binary phase function~\cite{zhandry_how_2021}, and a random Clifford unitary~\cite{Berg_A_2021}. This has the advantage of additionally forming an exact 2-design~\cite{ambainis_quantum_2007,dankert_exact_2009,roberts_chaos_2017}, due to the random Clifford. 

We define our pseudo-SWSSB ensemble as
\begin{align}\label{eq:E2_def}
    \mathcal{E}(\mathbb{Z}_2):=\left\{\frac{U\Pi_rU^\dagger}{r}:\ U \sim \mathcal{U}_{\rm PRU}(\mathcal{H}_+)\right\},
\end{align}
with $\Pi_r$ a rank-$r$ projector supported inside the symmetry sector subspace $\mathcal{H}_+$ and $\mathcal{U}_{\rm PRU}(\mathcal{H}_+)$ a PRU ensemble on $\mathcal{H}_+$. 
These states can be prepared efficiently. A possible approach, illustrated in Fig.~\ref{fig:schematic}(a), is to take a product state $\ket{+}\otimes \ket{0}^{\otimes N-1}$ and apply a fully depolarizing channel to the last $\log_2(r)$ qubits (we can assume $r$ is a power of 2) to get a mixed state with the desired spectrum;
then we apply a PRU on qubits $i=2,\dots N$; finally we apply the encoder circuit for the repetition code where qubit $i = 1$ serves as the `logical' and qubits $i = 2,\dots N$ as the `syndromes'. The encoder circuit turns $X_1$ into the global symmetry operator $\bar{X}$, which ensures the output state is within the $\bar{X} = +1$ sector since the input has $X_1 = +1$. 
In all, this results in a state $U\Pi_r U^\dagger / r$ with $U$ a PRU on the symmetry sector subspace $\mathcal{H}_+$. The state can be prepared efficiently since the repetition code encoder is implementable as a depth-$N$ staircase of CNOT gates and PRUs are efficiently implementable by definition.

\begin{figure}
    \centering
    \includegraphics[width=0.99\columnwidth]{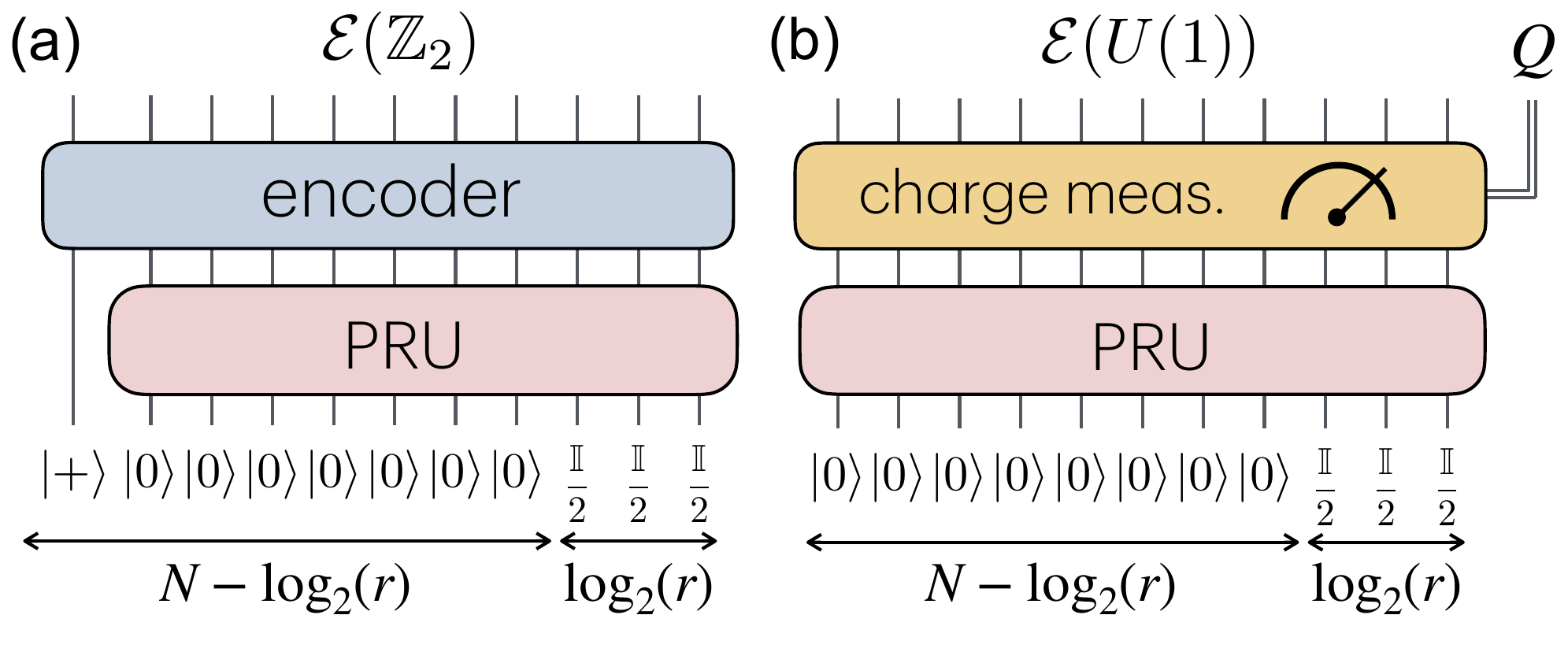}
    \caption{Circuits for the preparation of pseudo-SWSSB states with (a) $\mathbb{Z}_2$ symmetry or (b) $U(1)$ symmetry.
    In (a), the encoder is a staircase circuit of CNOT gates mapping $X_1$ to the symmetry generator $\bar{X} = \prod_i X_i$. 
    In (b), the charge measurement can be implemented with $O(N)$ ancilla qubits and $O(\log(N))$ depth; postselection of an outcome $Q$ is sample-efficient as long as $Q = N/2 + O(N^{1/2})$. 
    To achieve pseudo-SWSSB the number of depolarized qubits $\log_2(r)$ must be superlogarithmic but subextensive.}
    \label{fig:schematic}
\end{figure}

Next we prove that $\mathcal{E}(\mathbb{Z}_2)$ defined in Eq.~\eqref{eq:E2_def} has pseudo-SWSSB if $r$ scales suitably with $N$. 
Note that, since every $\rho\in \mathcal{E}(\mathbb{Z}_2)$ is proportional to a projector, we have $R_1(i,j)=R_2(i,j)$ and $R_1(i,j) \leq F(i,j)\leq\sqrt{R_1(i,j)}$~\cite{Liu_diagnosingstrongtoweaksymmetrybreaking_2024,weinstein_efficient_2024}, so all three diagnostics have the same asymptotic behavior in this case. We thus focus only on $R_1(i,j)$ in the following unless specified.

\begin{theorem}[Pseudo-SWSSB for $\mathbb{Z}_2$ symmetry] 
\label{thm:pseudo_swssb_Z2} 
For $r$ satisfying $\omega(\log N)<\log r<o(N)$, the ensemble $\mathcal{E}(\mathbb{Z}_2)$ in Eq.~\eqref{eq:E2_def} has pseudo-SWSSB.
\end{theorem}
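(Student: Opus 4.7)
I will verify properties (i)--(iv) of Definition~\ref{def:pseudoSWSSB} in turn. Properties (i) and (ii) are built into the construction of Fig.~\ref{fig:schematic}(a): depolarizing $\log_2 r$ qubits of $\ket{+}\otimes\ket{0}^{\otimes N-1}$ yields a state proportional to a rank-$r$ projector, the repetition-code encoder (a depth-$N$ CNOT staircase mapping $X_1\mapsto\bar X$) places it inside $\mathcal{H}_+$, and the PRU on the ``logical'' qubits pulled back through the encoder acts as an efficiently implementable PRU on $\mathcal{H}_+$. The output $\rho=U\Pi_r U^\dagger/r$ is manifestly supported on $\mathcal{H}_+$, so $\bar X\rho=\rho$.

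For (iii), I exploit that $\rho$ is proportional to a projector, so $\sqrt{\rho}=U\Pi_r U^\dagger/\sqrt{r}$, together with the swap trick,
\begin{equation}
R_1(i,j)=\frac{1}{r}\,\Tr\bigl[(M\otimes M)(\tilde X\otimes\tilde X)\,\mathbb{F}\bigr],
\end{equation}
with $M\equiv U\Pi_r U^\dagger$, $\tilde X\equiv P_+Z_iZ_jP_+$, and $\mathbb{F}$ the swap on $\mathcal{H}_+^{\otimes 2}$. Because the PFC ensemble is an \emph{exact} unitary 2-design on $\mathcal{H}_+$, $\mathbb{E}[M\otimes M]$ equals the Haar average $\alpha\mathbb{I}+\beta\mathbb{F}$, with $\alpha,\beta$ fixed by $\Tr\Pi_r=\Tr\Pi_r^2=r$. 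The $\mathbb{F}$-piece is proportional to $(\Tr\tilde X)^2$ and vanishes, since $\Tr(P_+Z_iZ_j)=\tfrac12[\Tr(Z_iZ_j)+\Tr(\bar XZ_iZ_j)]=0$ for $i\neq j$. The $\mathbb{I}$-piece then yields $\mathbb{E}[R_1(i,j)]=(rd_+-1)/(d_+^2-1)=O(r/d)$, so $\mathbb{E}[\sum_{i,j}R_1(i,j)]=O(N^2 r/d)=o(N^2)$ under the hypothesis $\log r<o(N)$. Since $R_1(i,j)\geq 0$, Markov's inequality upgrades this to $\sum_{i,j}R_1(i,j)=o(N^2)$ with probability $1-o(1)$.

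For (iv), I take $\rho_g=\rho_0=P_+/d_+$ (the maximally mixed state in $\mathcal{H}_+$, which satisfies $R_1(i,j)=1$ identically and so exhibits SWSSB) and insert the hybrid ensemble $\mathcal{E}_{\rm Haar}=\{U\Pi_r U^\dagger/r:\ U\sim\mathrm{Haar}(\mathcal{H}_+)\}$, so that by the triangle inequality it suffices to prove: (a) $\mathbb{E}_{\rm PRU}[\rho^{\otimes k}]$ and $\mathbb{E}_{\rm Haar}[\rho^{\otimes k}]$ are computationally indistinguishable, and (b) $\mathbb{E}_{\rm Haar}[\rho^{\otimes k}]$ and $\rho_0^{\otimes k}$ are statistically close. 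Statement (a) is a standard PRU reduction: any efficient distinguisher of the two density matrices, pre-composed with the efficient preparation circuit of Fig.~\ref{fig:schematic}(a) run $k$ times, would efficiently distinguish PFC from Haar on $\mathcal{H}_+$, contradicting PRU security. For (b) one establishes
\begin{equation}
\tracenorm{\mathbb{E}_{\rm Haar}[\rho^{\otimes k}]-\rho_0^{\otimes k}}=O\!\left(\mathrm{poly}(k)/r\right),
\end{equation}
which is $o(1/\mathrm{poly}(N))$ under $k=\mathrm{poly}(N)$ and $\log r>\omega(\log N)$.

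The main technical obstacle is the bound in step (b). My plan is to expand the Haar $k$-th moment via the Weingarten formula as a linear combination of the permutation operators $\mathbb{F}_\sigma$ on $\mathcal{H}_+^{\otimes k}$, with coefficients $r^{c(\sigma)}\,\Wg(\sigma\tau^{-1},d_+)$ (where $c(\sigma)$ counts the cycles of $\sigma\in S_k$). The identity-permutation contribution reproduces $\rho_0^{\otimes k}$ exactly; the remaining terms are bounded using the Weingarten asymptotics $|\Wg(\sigma,d_+)|=O(d_+^{-k-|\sigma|})$ together with $\tracenorm{\mathbb{F}_\sigma}=d_+^{c(\sigma)}$, yielding, after the $r^{-k}$ normalization, a dominant correction of order $k^2/r$. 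This is essentially the analysis underlying the pseudorandom density matrix construction of Ref.~\cite{Bansal_pseudorandom_2025}, specialized to the symmetry sector $\mathcal{H}_+$.
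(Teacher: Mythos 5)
Your proposal follows essentially the same route as the paper: (i)--(ii) from the construction, (iii) via the projector property, the exact 2-design/Weingarten average $\mathbb{E}[R_1]=O(r/d)$ and Markov, and (iv) via the hybrid PRU$\to$Haar reduction followed by a Weingarten-calculus trace-distance bound $O(k^2/r)$, which is exactly the paper's Lemma~\ref{lem:stat_distance}. The only minor imprecision is the claim that the identity-permutation term reproduces $\rho_0^{\otimes k}$ \emph{exactly}: its coefficient is only close to $d_+^{-k}$, and the deviation must be absorbed into the $O(k^2/r)$ error, as the paper does in Eq.~\eqref{eq:si_triangle}.
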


\begin{proof}
    We prove the four conditions in Definition~\ref{def:pseudoSWSSB}. 
    (i) The ensemble is efficiently preparable as both the PRU and the encoder circuit have polynomial depth.
    (ii) Strong symmetry is ensured by the fact that each $\rho \in \mathcal{E}(\mathbb{Z}_2)$ is supported inside the $\mathcal{H}_+$ symmetry sector: $\bar{X}\rho = +\rho$.
    (iii) The strong symmetry is unbroken because the R\'enyi-1 correlator $R_1(i,j)$ is negligibly small with high probability. This holds because each state $\rho$ in the ensemble is proportional to a rank-$r$ projector, so $\sqrt{\rho} = \sqrt{r} \rho$, and $R_1(i,j) = r {\rm Tr}(\rho O_i O_j^\dagger \rho O_i^\dagger O_j)$. This is quadratic in $\rho = U\Pi_r U^\dagger/r$, and thus in the PRU $U$. Since the PFC ensemble contains a random Clifford unitary, it forms an exact 2-design. Thus the average of $R_1$ over the ensemble can be computed exactly; a straightforward calculation gives $\mathbb{E}_{\rho\sim \mathcal{E}(\mathbb{Z}_2)} R_1(i,j) \leq O(r / d)$. Since $R_1(i,j)$ is non-negative (as seen by expanding $\rho$ in its eigenbasis), Markov's inequality gives 
    \begin{equation}
        \textsf{Pr}(R_1(i,j) > \delta) < O\left(\frac{r}{\delta d}\right) \label{eq:thm1_r1_small}
    \end{equation} 
    for all $\delta > 0$. Taking e.g. $\delta = d^{-1/2}$ shows that, if $r$ is sub-exponential, then $R_1(i,j)$ is exponentially small with probability exponentially close to 1. Thus that the strong symmetry is unbroken.  
    (iv) Computational indistinguishability between $\rho_0$ and $\mathcal{E}(\mathbb{Z}_2)$ can be proven in two steps. First we replace the PRUs by genuine Haar-random unitaries, defining the ensemble 
        $\mathcal{E}'(\mathbb{Z}_2):=\left\{{U\Pi_rU^\dagger}/{r}:\ U \sim \mathcal{U}_{\rm Haar}(\mathcal{H}_+)\right\}$;
    this is computationally indistinguishable from $\mathcal{E}(\mathbb{Z}_2)$ by the definition of PRUs. 
    Then we prove the following (see Supplementary Information):
    \begin{lemma}
    \label{lem:stat_distance}
    For $e^{\omega(\log N)} < r < e^{o(N)}$ and any number of state copies $k={\rm poly}(N)$, the ensemble $\mathcal{E}'(\mathbb{Z}_2)$ satisfies
        \begin{align}
            \left\Vert {\mathbb{E}}_{\rho \sim \mathcal{E}'(\mathbb{Z}_2)}\rho^{\otimes k}-\rho_0^{\otimes k}\right\Vert_{\rm tr}\leq O\(\frac{k^2}{r}\). \label{eq:stat_indist_lemma}
        \end{align}
    \end{lemma}
    This asserts the statistical indistinguishability~\footnote{Statistical indistinguishability allows the adversary to use arbitrary algorithms (not necessarily efficient) on the given number of copies. It is a stronger condition than computational indistinguishability.} between the random ensemble $\mathcal{E}'(\mathbb{Z}_2)$ and $\rho_0$ for any number $k = {\rm poly}(N)$ of copies. 
    Thus $\mathcal{E}(\mathbb{Z}_2)$ is indistinguishable from $\mathcal{E}'(\mathbb{Z}_2)$ which is indistinguishable from $\rho_0$, giving property (iv) and concluding the proof. 
\end{proof}

Theorem~\ref{thm:pseudo_swssb_Z2} is one of the main results of this work. It proves the existence of efficiently preparable state ensembles that do not exhibit SWSSB but cannot be efficiently distinguished from the prototypical SWSSB state $\rho_0 = (\mathbb{I} + \bar{X})/d$. Thus mixed-state phases of matter based on $\mathbb{Z}_2$ SWSSB can be hard to detect without additional structure or prior information on the system. 
How hard exactly? The largest error term in our analysis, Eq.~\eqref{eq:stat_indist_lemma}, is $O(k^2/r)$. Thus, given a constant number $k$ of copies at a time (the most experimentally realistic scenario), it would take $\Omega(r^2)$ experimental repetitions~\footnote{If the trace distance between two states is $\epsilon$ it takes $\Omega(1/\epsilon^2)$ repetitions of the optimal measurement to distinguish them with high confidence.} to detect a discrepancy between $\mathcal{E}(\mathbb{Z}_2)$ and $\rho_0$ with high confidence. This bound is saturated by measuring the purity ${\rm Tr}(\rho^2) = 1/r$ from two-copy measurements via SWAP test. 
It follows that {\it no protocol} using constant quantum memory is asymptotically more efficient than a brute-force measurement of $R_2$. While $R_2$ itself is not a reliable measure of SWSSB, recent work \cite{zhang_fluctuation_2024} finds a series expansion of the fidelity correlator $F$ in terms of higher-R\'enyi correlators $R_{2n}$, suggesting that a sample complexity of ${\rm poly}(r)$ should be sufficient to estimate $F$, in agreement with our result. 


{\it $U(1)$ symmetry.}---We now generalize our finding from the discrete symmetry $\mathbb{Z}_2$ to the continuous symmetry $U(1)$. The latter acts on the Hilbert space as $e^{i\theta \sum_i (\mathbb{I}-Z_i)/2}$ for $e^{i\theta} \in U(1)$, splitting the $N$-qubit Hilbert space into a direct sum of charge sector subspaces $\mathcal{H}_Q$ of dimension $d_Q = \binom{N}{Q}$, each spanned by computational basis states of Hamming weight $Q$. The standard choice for local charged operators is $O_i = S_i^+ = (X_i + i Y_i)/2$ and $O_i^\dagger = S_i^-$. These are not unitary, so the fidelity correlator $F$ is ill-defined; we will focus on $R_1$ (one could alternatively take, e.g., $O_i = X_i$ without affecting our results).
In analogy with the $\mathbb{Z}_2$ case, the reference state for $U(1)$ SWSSB is a maximally mixed state inside a charge sector subspace $\mathcal{H}_Q$, $\rho_Q=P_Q/d_Q$, with $P_Q$ the projector on $\mathcal{H}_Q$.
$\rho_Q$ exhibits SWSSB if the charge density $\sigma:=\lim_{N\to\infty} Q/N$ is a finite constant not equal to 0 or 1 (see Supplementary Information). 

Adapting the ensemble construction in Eq.~\eqref{eq:E2_def} to the case of $U(1)$ symmetry is not straightforward because existing PRU ensembles require a Hilbert space with a qubit tensor product structure, while the charge sector we are interested in has dimension $\binom{N}{Q}$ which is not a power of 2 (or any one prime number). 
To avoid this issue, we follow a slightly different route sketched in Fig.~\ref{fig:schematic}(b). 
First we prepare a rank-$r$ mixed state $\Pi_r/r$ by depolarizing $\log_2(r)$ qubits in a pure product state; then we apply a PRU $U$ from the PFC ensemble to the whole $N$-qubit Hilbert space, obtaining a state $U\Pi_r U^\dagger / r$. 
At this point we perform a measurement of the charge $\frac{1}{2}\sum_i(Z_i-\mathbb{I})$, i.e., a Hamming weight projection, which can be done efficiently using $O(N)$ auxiliary qubits and $O(\log N)$ depth~\cite{Rethinasamy_Logarithmic_2024}. This measurement returns a value $Q$ of the charge distributed according to $p(Q) = d_Q/d = 2^{-N} \binom{N}{Q}$, a binomial distribution with mean $N/2$ and standard deviation $\sqrt{N}/2$. 
Thus, each value of $Q$ in the range $N/2 + O(\sqrt{N})$ is produced in the experiment with probability $\Omega(1/\sqrt{N})$, and can be postselected with polynomial sampling overhead. 
This protocol can be used to efficiently prepare the following ensemble for any value of $Q = N/2 + O(\sqrt{N})$:
\begin{align}\label{eq:E2_U1_def}
    \mathcal{E}(U(1))
    :=\left\{\frac{ P_QU\Pi_rU^\dagger P_Q}{\Tr\( P_QU\Pi_rU^\dagger\)}:\ 
    U\sim\mathcal{U}_{\rm PRU}(\mathcal{H})\right\}.
\end{align}

The second main result of our work is that the above ensemble exhibits $U(1)$-SWSSB:
\begin{theorem}[Pseudo-SWSSB for $U(1)$ symmetry] \label{thm:pseudo_swssb_U1}
     $\mathcal{E}(U(1))$ in Eq.~\eqref{eq:E2_U1_def} has pseudo-SWSSB if $e^{\omega(\log N)}<r<e^{o(N)}$ and $|Q-N/2|\leq O(N^{1/2})$.
\end{theorem}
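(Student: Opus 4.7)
The plan is to verify the four conditions of Definition~\ref{def:pseudoSWSSB} in turn, mirroring the structure of the proof of Theorem~\ref{thm:pseudo_swssb_Z2}. Properties (i) and (ii) are immediate: the preparation circuit of Fig.~\ref{fig:schematic}(b) is efficient (the only nontrivial ingredients are the PRU itself and the logarithmic-depth charge measurement of Ref.~\cite{Rethinasamy_Logarithmic_2024}, with $O(\sqrt{N})$ sampling overhead from postselection when $|Q-N/2|=O(\sqrt{N})$), and every $\rho\in\mathcal{E}(U(1))$ is by construction supported inside the charge sector $\mathcal{H}_Q$.

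For property (iii), I would bound the R\'enyi-1 correlator $R_1(i,j)$ on average. Unlike the $\mathbb{Z}_2$ case, the state $\rho=M/\Tr(M)$ with $M=P_Q U\Pi_r U^\dagger P_Q$ is not proportional to a projector; however, with high probability its eigenvalues concentrate near $1/r$ in the regime $r\ll d_Q$. This motivates the approximation $\sqrt{\rho}\approx\sqrt{r}\,\rho$, which reduces $R_1(i,j)$ to $r\,\Tr(\rho X\rho X^\dagger)$ with $X=S_i^+S_j^-$ (here using $[X,P_Q]=0$). Both the numerator and denominator of this expression are degree-2 polynomials in $U\Pi_r U^\dagger$, whose expectations are computable exactly via Weingarten calculus thanks to the 2-design property of PFC; combined with concentration of $\Tr(M)$ about its mean, a direct calculation gives $\mathbb{E}[R_1(i,j)]=O(r/d_Q)$. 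Since $d_Q=\Omega(2^N/\sqrt{N})$ for $|Q-N/2|=O(\sqrt{N})$ and $r=e^{o(N)}$, this is exponentially small, and Markov's inequality yields $R_1(i,j)=o(1)$ with probability exponentially close to $1$, analogously to Eq.~\eqref{eq:thm1_r1_small}.

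For property (iv), I would adapt the two-step template of Theorem~\ref{thm:pseudo_swssb_Z2}. First replace the PRU on $\mathcal{H}$ by a Haar-random unitary, defining the ensemble $\mathcal{E}'(U(1))$ which is computationally indistinguishable from $\mathcal{E}(U(1))$ by the definition of the PRU. Then I would prove a statistical analog of Lemma~\ref{lem:stat_distance}, namely $\|\mathbb{E}_{\rho\sim\mathcal{E}'(U(1))}\rho^{\otimes k}-\rho_Q^{\otimes k}\|_{\rm tr}=O(k^2/r)$, which is $o(1/{\rm poly}(N))$ for super-polynomial $r$ and polynomial $k$. The natural route is to observe that, for Haar $U$, the vectors $\{P_Q U|e_\alpha\rangle\}_{\alpha=1}^{r}$ are asymptotically distributed as $r$ independent Haar-random vectors in $\mathcal{H}_Q$ (rescaled by $\sqrt{d_Q/d}$), since $r\ll d_Q$ forces them to be nearly orthogonal with well-concentrated norms. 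Thus $\rho$ is close in distribution to a uniformly random rank-$r$ maximally mixed state inside $\mathcal{H}_Q$, and the proof of Lemma~\ref{lem:stat_distance} carries over with the ambient dimension $d$ replaced by $d_Q$.

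The main technical obstacle is making the approximation $\sqrt{\rho}\approx\sqrt{r}\,\rho$ rigorous, since $R_1$ depends on $\rho$ non-polynomially through $\sqrt{\rho}$ and so cannot be averaged directly by 2-design tools. A clean route is to bound the error $\|\sqrt{\rho}-\sqrt{r}\,\rho\|_F^2$ using the elementary inequality $(1-\sqrt{1+x})^2\leq x^2$ valid for $x\geq -1$, reducing spectral concentration of $\rho$ near $1/r$ to a bound on $r\Tr(\rho^2)-1$. The latter is computable in expectation via 2-design Weingarten calculus and scales as $O(r/d_Q)$, so Markov yields the desired spectral bound with high probability; propagating it into $R_1(i,j)$ contributes only a subleading correction that remains exponentially small in the target regime. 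This spectral concentration step is the technical heart of the proof.
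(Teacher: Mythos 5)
Your handling of conditions (i)--(iii) is essentially the paper's own argument. For (iii) the paper likewise introduces the unnormalized operator $\tilde\rho=\frac{d}{rd_Q}P_QU\Pi_rU^\dagger P_Q$, uses the exact 2-design property to get $\mathbb{E}\,{\rm Tr}\,\tilde\rho=1$ and $\mathbb{E}\,{\rm Tr}\,\tilde\rho^2=1/r+O(1/d_Q)$, converts this into spectral concentration of the eigenvalues around $1/r$ (your $(1-\sqrt{1+x})^2\le x^2$ trick plays the same role as the paper's bookkeeping of eigenvalue deviations $\delta_i$ with $\sum_i\delta_i^2\le O(d_Q^{-1/2})$ w.h.p.), and then replaces $\sqrt{\rho}$ by $\sqrt{r}\,\tilde\rho$ up to a small trace-norm error before averaging the remaining quadratic expression with Weingarten calculus and applying Markov. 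One caveat in your phrasing: $\mathbb{E}[R_1]$ and $r\,{\rm Tr}(\rho^2)-1$ are \emph{not} polynomials in $U$, because the normalization ${\rm Tr}(P_QU\Pi_rU^\dagger)$ sits in the denominator; so ``computable exactly via 2-design Weingarten calculus'' must be routed through the unnormalized $\tilde\rho$ together with concentration of its trace (conditioning on the good event and using $0\le R_1\le 1$ on its complement), which is what the paper does explicitly and what your ``concentration of ${\rm Tr}(M)$'' remark only gestures at.

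The genuine gap is in property (iv). You assert that the projected columns $P_QU|e_\alpha\rangle$ are ``asymptotically distributed as $r$ independent Haar vectors in $\mathcal{H}_Q$'' and that the proof of Lemma~\ref{lem:stat_distance} therefore ``carries over with $d\to d_Q$.'' That is a plausible heuristic, not a proof: what the theorem needs is a quantitative bound on $\Vert\mathbb{E}_{\rho\sim\mathcal{E}'(U(1))}\rho^{\otimes k}-\rho_Q^{\otimes k}\Vert_{\rm tr}$ valid for all $k={\rm poly}(N)$ with error $o(1/{\rm poly}(N))$, and establishing your distributional approximation at the level of $k$-copy averages with controlled error is exactly the content of the missing lemma, not a way around it. In particular, the state is $P_QU\Pi_rU^\dagger P_Q$ divided by the \emph{random} normalization ${\rm Tr}(P_QU\Pi_rU^\dagger)$; this $U$-dependent denominator is precisely the complication the paper flags, and it is handled there by a triangle-inequality split: one term controls the normalization fluctuations through the moments $f(k)=\mathbb{E}({\rm Tr}\,\tilde\rho)^k=1+O(k^2/d_Q)$ and Jensen's inequality, the other compares $\mathbb{E}\,\tilde\rho^{\otimes k}/f(k)$ with $P_Q^{\otimes k}/d_Q^k$ by isolating the identity permutation in the Weingarten sum, giving the dominant $O(k^2/r)$. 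Until you supply a quantitative argument of this kind (or an explicit coupling to the idealized rank-$r$ Haar ensemble in $\mathcal{H}_Q$ with a $k$-copy trace-norm error bound), step (iv) of your proposal is unproven; the remainder is sound.
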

The proof is given in the Supplementary Information. Here we provide a brief sketch. Properties (i) and (ii) (efficient preparation and strong symmetry) are apparent from the construction described above. Property (iv) (computational indistinguishability) is proven similarly to the $\mathbb{Z}_2$ case (Lemma~\ref{lem:stat_distance}), with some additional complications in the Weingarten calculus due to the presence of $U$ in the denominator in Eq.~\eqref{eq:E2_U1_def}.
Finally to establish property (iii) (absence of SWSSB) we use the fact that the PFC PRUs form an exact 2-design to show that $\rho\in\mathcal{E}(U(1))$ is extremely close to a (trace-normalized) projector with high probability, and thus reduce $R_1(i,j)$ to a quadratic function of the PRU $U$ with negligible error. At that point the methods used in the $\mathbb{Z}_2$ case carry over. 

Theorem~\ref{thm:pseudo_swssb_U1} shows that our statements on the hardness of observing SWSSB apply not just to discrete symmetries, but also to continuous ones. This is a nontrivial generalization due to the richer structure of symmetry sector subspaces and their lack of tensor product structure. Our approach, based on charge measurements and (sample-efficient) postselection, is straightforwardly generalizable to other Abelian symmetry groups, such as products of multiple $U(1)$ and $\mathbb{Z}_2$ groups.


{\it Discussion and outlook.}---We have shown that the phenomenon of strong-to-weak symmetry breaking, a signature of intrinsically mixed phases of matter, can be hard to detect experimentally. In particular we have proven that, given only standard experimental access to copies of an unknown mixed state, no efficient experiment can in general decide whether or not the state has SWSSB. Our approach leverages recent ideas from quantum information and quantum cryptography~\cite{ji_pseudorandom_2018,metger_simple_2024} to construct pseudorandom density matrices~\cite{Bansal_pseudorandom_2025} that can hide their lack of SWSSB from any efficient experiment. 
While we rule out efficient state-agnostic protocols, efficient detection may still be possible in the presence of additional structure or prior information on the system~\cite{weinstein_efficient_2024}. 

Our work leaves several open questions. First, our construction only addresses on-site Abelian symmetries. Generalization to more complex symmetries such as non-Abelian or higher-form symmetries is an interesting future direction. 
A limitation of our approach for $U(1)$ symmetry is that we cannot efficiently prepare states away from the largest symmetry sector subspaces, i.e., at asymptotic charge density $\sigma = \lim_{N\to\infty} Q/N \neq 1/2$. This is due to our measurement-based approach to project PRUs on the whole Hilbert space into a fixed symmetry sector. To relax this limitation, it would be interesting to develop PRU ensembles that can target subspaces of arbitrary dimension.

By establishing a fundamental limitation on the observability of intrinsically mixed phases of matter, our work adds to a growing body of research that leverages pseudorandomness to understand the hardness of observing key properties of quantum many-body systems such as entanglement~\cite{Aaronson_quantum_2024,giurgicatiron_pseudorandomnesssubsetstates_2023,jeronimo_pseudorandompseudoentangled_2024,cheng_pseudoentanglementtensornetworks_2024}, thermalization~\cite{feng_dynamics_2024,lee_fastpseudothermalization_2024}, chaos~\cite{gu_simulatingquantumchaoschaos_2024,lee2024pseudochaoticmanybodydynamicspseudorandom}, nonstabilizerness~\cite{Gu_pseudomagic_2024}, topological order~\cite{Schuster_randomunitariesextremelylow_2025}, and holography~\cite{cheng_pseudoentanglementtensornetworks_2024,engelhardt_spoofing_2024,akers_holographicpseudoentanglementcomplexityadscft_2024}. 
A better understanding of which features are or are not efficiently observable in quantum experiments remains an essential goal for future research. 

{\it Acknowledgments.} We thank Yuan Xue for helpful discussions. XF thanks Zhen Bi, Leonardo A. Lessa and Shengqi Sang for the useful discussions. XF was supported by a TQI Postdoctoral Fellowship.

\let\oldaddcontentsline\addcontentsline
\renewcommand{\addcontentsline}[3]{}
\bibliography{pseudo_swssb}
\let\addcontentsline\oldaddcontentsline


\clearpage
\widetext

\setcounter{equation}{0}
\setcounter{figure}{0}
\setcounter{table}{0}
\setcounter{page}{1}
\makeatletter
\renewcommand{\thesection}{S\arabic{section}}
\renewcommand{\theequation}{S\arabic{equation}}
\renewcommand{\thefigure}{S\arabic{figure}}

\begin{center}
\textbf{\large Supplementary Information: Hardness of observing strong-to-weak symmetry breaking} \\
\end{center}

\tableofcontents

\section{Review of useful facts about Weingarten calculus}

Here we review some facts about Weingarten calculus which are used in our proofs.

\begin{fact}[$k$-fold twirling channel] 
We have 
\label{fact:weingarten}
    \begin{align}
        \Phi^{(k)}_{\rm Haar}(O) 
        = \mathbb{E}_{U\sim {\rm Haar}} \[U^{\otimes k}OU^{\dagger\otimes k}\]=\sum_{\sigma,\tau\in S_k}\Wg_d(\sigma\tau^{-1})\Tr\(\hat{\tau}^{-1}O\)\hat{\sigma},
    \end{align}
where $\hat{\sigma}$ represents the action of permutation $\sigma \in S_k$ on the $k$ copies of the Hilbert space and $\textsf{Wg}_d$ is the Weingarten function on a $d$-dimensional Hilbert space. 
For $k = 2$, we have 
\begin{equation}
    \Wg_d(e) = \frac{1}{d^2-1},
    \qquad 
    \Wg_d(\pi) = -\frac{1}{d(d^2-1)},
\end{equation}
where $e,\pi\in S_2$ are the identity and the transposition respectively. 
\end{fact}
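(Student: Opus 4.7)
The plan is to derive the Weingarten formula by combining Schur-Weyl duality with an inner-product argument, and then specialize to $k=2$. First I would observe that the twirling channel $\Phi^{(k)}_{\rm Haar}$ commutes with conjugation by $V^{\otimes k}$ for every unitary $V$, by left-invariance of the Haar measure. Consequently, the output $\Phi^{(k)}_{\rm Haar}(O)$ lies in the commutant of the diagonal $U^{\otimes k}$-action on $(\mathbb{C}^d)^{\otimes k}$. By Schur-Weyl duality, this commutant is spanned by the permutation operators $\{\hat{\sigma}\}_{\sigma\in S_k}$, so one may write $\Phi^{(k)}_{\rm Haar}(O) = \sum_{\sigma\in S_k} c_\sigma(O)\,\hat{\sigma}$ for some linear functionals $c_\sigma$ on operators.

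Next I would pin down these coefficients by testing against permutations. A direct computation using cyclicity of the trace and the fact that $\hat{\tau}^{-1}$ commutes with every $U^{\otimes k}$ shows $\Tr(\hat{\tau}^{-1}\Phi^{(k)}_{\rm Haar}(O)) = \Tr(\hat{\tau}^{-1} O)$. Expanding the permutation decomposition writes the same quantity as $\sum_\sigma G_{\tau\sigma}\, c_\sigma(O)$ with Gram matrix $G_{\tau\sigma} = \Tr(\hat{\tau}^{-1}\hat{\sigma}) = d^{\,\mathrm{cyc}(\sigma\tau^{-1})}$, where $\mathrm{cyc}(\pi)$ denotes the number of cycles of $\pi$. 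Inverting this linear system (which is possible whenever $d \geq k$) defines the Weingarten function $\Wg_d(\sigma\tau^{-1}) := (G^{-1})_{\sigma\tau}$ and immediately reproduces the stated formula.

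For the $k = 2$ specialization, I would just plug in: $S_2 = \{e,\pi\}$ has $\mathrm{cyc}(e) = 2$ and $\mathrm{cyc}(\pi) = 1$, so the $2\times 2$ Gram matrix has diagonal entries $d^2$ and off-diagonal entries $d$, with determinant $d^2(d^2-1)$. A direct inversion then yields $\Wg_d(e) = 1/(d^2-1)$ and $\Wg_d(\pi) = -1/[d(d^2-1)]$, matching the stated values.

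The only substantive input is Schur-Weyl duality, which licenses the passage from ``output commutes with every $V^{\otimes k}$'' to ``output is a linear combination of permutation operators''; everything downstream is routine linear algebra and manipulations of the trace. A mild subtlety to flag is that the Gram matrix $G$ is only invertible when $d \geq k$, so the formula implicitly assumes this regime; in the paper $d$ is exponentially large in $N$ while the number of copies $k = \mathrm{poly}(N)$, so the assumption is met with room to spare.
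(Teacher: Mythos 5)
Your proposal is correct, and it supplies a proof where the paper gives none: Fact~\ref{fact:weingarten} is quoted in the Supplement as a standard result of Weingarten calculus, and your Schur--Weyl-plus-Gram-inversion argument is exactly the standard derivation found in the literature the paper relies on. The key steps all check out: unitary invariance of the Haar measure puts $\Phi^{(k)}_{\rm Haar}(O)$ in the commutant of the diagonal action, Schur--Weyl duality gives the permutation-operator expansion, the identity $\Tr(\hat{\tau}^{-1}\Phi^{(k)}_{\rm Haar}(O))=\Tr(\hat{\tau}^{-1}O)$ follows from $[\hat{\tau},U^{\otimes k}]=0$ and cyclicity, and the $k=2$ Gram matrix with entries $d^2$ and $d$ inverts to give exactly the quoted values of $\Wg_d(e)$ and $\Wg_d(\pi)$. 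One small point you should make explicit: writing $\Wg_d(\sigma\tau^{-1}):=(G^{-1})_{\sigma\tau}$ presumes that the inverse Gram matrix depends only on the product $\sigma\tau^{-1}$; this holds because $G$ acts as convolution by the class function $\pi\mapsto d^{\mathrm{cyc}(\pi)}$ in the group algebra $\mathbb{C}[S_k]$ (equivalently, $G$ commutes with simultaneous right translation of both indices), so its inverse is again a convolution operator and the notation is justified. Your caveat that $G$ is invertible only for $d\geq k$ is the right one to flag, and it is amply satisfied here since $d$ (or $d/2$, $d_Q$) is exponential in $N$ while $k={\rm poly}(N)$; for $d<k$ one would instead define $\Wg_d$ via a pseudo-inverse, but that regime is never needed in this paper.
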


\begin{fact}[Bounds for absolute value of Weigarten function~\cite{collins_weingarten_2017,cheng_pseudoentanglementtensornetworks_2024}]
\label{fact:wg_abs_bound}
We have 
    \begin{align}
        |\Wg_d(\sigma)|\leq d^{-k}\(\frac{d}{4}\)^{|\sigma|-k}\[1+O\(\frac{k^{7/2}}{d^2}\)\]
    \end{align}
    where $|\sigma|$ is the number of cylces in permutation $\sigma$. 
    Additionally, for $\sigma = e$ (identity permutation), we have~\cite{Zhou_emergent_2019}
    \begin{align}
        \Wg_d(e)=d^{-k}\[1+O\(\frac{k^2}{d^2}\)\].
    \end{align}
\end{fact}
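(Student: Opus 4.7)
The plan is to derive both bounds from the standard character-theoretic formula for the Weingarten function,
\begin{align*}
\Wg_d(\sigma) = \frac{1}{(k!)^2}\sum_{\lambda\vdash k,\ \ell(\lambda)\leq d}\frac{(f^\lambda)^2\,\chi^\lambda(\sigma)}{s_\lambda(1^d)},
\end{align*}
where the sum runs over partitions $\lambda$ of $k$ with at most $d$ parts, $f^\lambda=\chi^\lambda(e)$ is the dimension of the $S_k$-irrep labeled by $\lambda$, and $s_\lambda(1^d)=\prod_{(i,j)\in\lambda}(d+j-i)/h(i,j)$ is the corresponding $GL(d)$-irrep dimension (the Schur polynomial at the identity). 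Since $s_\lambda(1^d)$ is a polynomial of degree $k$ in $d$ with leading coefficient $k!/\prod h(i,j)$, in the regime $d\gg k$ one can expand $1/s_\lambda(1^d)$ as a power series in $1/d$, isolating a leading $1/d^k$ term from $O(1/d^{k+2})$ corrections.

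First, keeping only the leading $1/d^k$ contribution in each summand and resumming over $\lambda$ via Frobenius' character formula, I would recover the Collins--Matsumoto asymptotic
\begin{align*}
\Wg_d(\sigma)=d^{-(2k-|\sigma|)}\prod_{c\,\in\,\mathrm{cyc}(\sigma)}(-1)^{|c|-1}\mathrm{Cat}_{|c|-1}\cdot[1+\varepsilon_d(\sigma,k)],
\end{align*}
where $|\sigma|$ is the number of cycles of $\sigma$ and $\mathrm{Cat}_n$ denotes the $n$-th Catalan number. Applying the elementary bound $\mathrm{Cat}_n\leq 4^n$ gives $\prod_c\mathrm{Cat}_{|c|-1}\leq 4^{k-|\sigma|}$, and combining factors yields
\begin{align*}
|\Wg_d(\sigma)|\leq d^{-(2k-|\sigma|)}\cdot 4^{k-|\sigma|}[1+\varepsilon_d]=d^{-k}(d/4)^{|\sigma|-k}[1+\varepsilon_d],
\end{align*}
which matches the stated form once I show $\varepsilon_d(\sigma,k)=O(k^{7/2}/d^2)$.

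Controlling $\varepsilon_d$ is the main technical step. The next-to-leading contribution to $1/s_\lambda(1^d)$ is $O(1/d^{k+2})$ and arises from pairwise products of the shift terms $(j-i)$ in the hook-content expansion. Summing it over $\lambda$ weighted by $(f^\lambda)^2\chi^\lambda(\sigma)$ cannot be done partition-by-partition, since the number of partitions of $k$ scales like $e^{O(\sqrt k)}$, which would overwhelm any $1/d^2$ gain. Instead I would exploit cancellations in the character sum, using $f^\lambda\leq\sqrt{k!}$, $|\chi^\lambda(\sigma)|\leq f^\lambda$, and standard combinatorial identities for symmetric power sums of hook contents, following the strategy of~\cite{collins_weingarten_2017,cheng_pseudoentanglementtensornetworks_2024}. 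The accumulated polynomial factors in $k$ from these manipulations produce the $k^{7/2}$ in the final error term.

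For the refinement at $\sigma=e$, every cycle has length one, so the leading Catalan product is exactly $\mathrm{Cat}_0^{\,k}=1$ rather than merely upper-bounded; combined with the identity $\sum_\lambda(f^\lambda)^2=k!$ this makes the leading order $\Wg_d(e)\sim d^{-k}$ exact. The only remaining error source is the next-to-leading correction to $1/s_\lambda(1^d)$, and because there is no additional bookkeeping required to ``distribute'' corrections across nontrivial cycles, a direct evaluation along the lines of~\cite{Zhou_emergent_2019} yields the sharper $O(k^2/d^2)$. The hardest part of the entire argument is obtaining uniform control over $\lambda$ of the subleading correction without paying the partition-count factor; once this is in place, both bounds follow by elementary rearrangement.
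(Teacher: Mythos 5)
The paper does not actually prove this statement: it is imported as a \emph{Fact} with citations to Collins--Matsumoto, to Ref.~\cite{cheng_pseudoentanglementtensornetworks_2024}, and (for the $\sigma=e$ refinement) to Ref.~\cite{Zhou_emergent_2019}, so there is no in-paper argument to compare against. Your sketch correctly reconstructs the standard literature route: the character-theoretic formula for $\Wg_d$, the hook-content expansion of $1/s_\lambda(1^d)$, the Collins--Matsumoto leading asymptotic $\Wg_d(\sigma)\sim d^{-(2k-|\sigma|)}\prod_c(-1)^{|c|-1}\mathrm{Cat}_{|c|-1}$, and the elementary bound $\mathrm{Cat}_n\le 4^n$, whose exponent bookkeeping $d^{-(2k-|\sigma|)}4^{k-|\sigma|}=d^{-k}(d/4)^{|\sigma|-k}$ checks out exactly. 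The one substantive caveat is that the entire difficulty of the Fact lives in the uniform error term $O(k^{7/2}/d^2)$, and your proposal asserts rather than derives it, deferring to the same references the paper cites; since that uniformity (valid for $d\gtrsim k^{7/4}$) is precisely the content of the cited theorem, this is consistent with the paper's own treatment, but it means your write-up proves nothing beyond what is already being cited. Two small inaccuracies worth fixing: the leading coefficient of $s_\lambda(1^d)$ in $d$ is $1/\prod h(i,j)=f^\lambda/k!$, not $k!/\prod h(i,j)$; and for a single $\lambda$ the first correction to $1/s_\lambda(1^d)$ is $O(1/d^{k+1})$ (the term $\sum_{(i,j)}(j-i)/d$ is generically nonzero) --- the $1/d$ term only cancels after summing over $\lambda$ via the $\lambda\leftrightarrow\lambda'$ symmetry, which is also the mechanism behind the sharper $O(k^2/d^2)$ at $\sigma=e$.
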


\begin{fact}[Gram matrix sum]\label{fact:trace_gram} We have
    \begin{align}
        \sum_{\sigma\in S_k}d^{|\sigma|}=\frac{(d-1+k)!}{(d-1)!}=d^k\[1+O\(\frac{k^2}{d}\)\].
    \end{align}
\end{fact}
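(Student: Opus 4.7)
The plan is to prove the combinatorial identity first and then extract the asymptotic expansion. The key quantity $\sum_{\sigma \in S_k} d^{|\sigma|}$ can be rewritten by grouping permutations by their cycle count: let $c(k,j)$ denote the unsigned Stirling number of the first kind, counting permutations in $S_k$ with exactly $j$ cycles. Then
\begin{equation}
\sum_{\sigma \in S_k} d^{|\sigma|} = \sum_{j=1}^{k} c(k,j)\, d^j,
\end{equation}
so the task reduces to the well-known generating function identity $\sum_{j=1}^{k} c(k,j)\, x^j = x(x+1)(x+2)\cdots(x+k-1)$.

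To establish this identity I would argue by induction on $k$, using the standard recurrence $c(k+1,j) = c(k,j-1) + k\, c(k,j)$, which reflects the two ways of building a permutation of $\{1,\dots,k+1\}$ from one of $\{1,\dots,k\}$: either place the new element in a cycle by itself (first term), or insert it immediately after one of the $k$ existing elements in its cycle (second term). Multiplying by $x^j$ and summing over $j$ converts the recurrence into $P_{k+1}(x) = (x+k) P_k(x)$, where $P_k(x) = \sum_j c(k,j) x^j$. With base case $P_1(x) = x$, this telescopes to $P_k(x) = x(x+1)\cdots(x+k-1) = (x+k-1)!/(x-1)!$, giving the first equality $\sum_\sigma d^{|\sigma|} = (d+k-1)!/(d-1)!$ upon setting $x=d$.

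For the asymptotic estimate, I would factor out $d^k$ and write
\begin{equation}
\frac{(d+k-1)!}{(d-1)!} = d^k \prod_{i=0}^{k-1}\left(1 + \frac{i}{d}\right).
\end{equation}
Taking the logarithm and using $\ln(1+x) = x + O(x^2)$ uniformly for $x \in [0,(k-1)/d]$ (which is valid in the regime $k^2/d \to 0$ relevant here), one gets $\sum_{i=0}^{k-1} \ln(1+i/d) = \frac{1}{d}\sum_{i=0}^{k-1} i + O(k^3/d^2) = \frac{k(k-1)}{2d} + O(k^3/d^2) = O(k^2/d)$. Exponentiating yields $\prod_{i=0}^{k-1}(1+i/d) = 1 + O(k^2/d)$, completing the estimate.

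There is no serious obstacle here: the combinatorial identity is classical, and the asymptotic bound is a straightforward Taylor expansion. The only point requiring a small amount of care is ensuring that the implicit regime $k = o(\sqrt{d})$ makes the error $O(k^2/d)$ genuinely small so that the higher-order corrections in the logarithm are subleading; this is consistent with the parameter ranges used elsewhere in the paper (e.g. $r < e^{o(N)}$ and $k = \mathrm{poly}(N)$ with $d$ exponential in $N$).
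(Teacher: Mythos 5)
Your proof is correct. The paper states this Fact without proof (it is a standard identity: the generating function of unsigned Stirling numbers of the first kind is the rising factorial, $\sum_{\sigma\in S_k}x^{|\sigma|}=x(x+1)\cdots(x+k-1)$), and your inductive derivation via the recurrence $c(k+1,j)=c(k,j-1)+k\,c(k,j)$ together with the logarithmic expansion of $\prod_{i=0}^{k-1}(1+i/d)$ is exactly the standard argument; your remark that the $O(k^2/d)$ error term is only meaningful in the regime $k^2/d\to 0$ (which holds throughout the paper, where $k=\mathrm{poly}(N)$ and $d=2^N$) is the right caveat.
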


\section{Proof of Lemma~\ref{lem:stat_distance}}\label{ssec:proof_lem3}

Here we will prove the statistical indistinguishability between the random ensemble $\mathcal{E}'(\mathbb{Z}_2)$ and the state $\rho_0 = (\mathbb{I} + \bar{X}) / d = (2/d) P_0 $. We take a number of state copies $k={\rm poly}(N)$ and focus on the regime $e^{\omega(\log N)}<r<e^{o(N)}$ for the parameter $r$ that sets the rank of states in the ensemble.

We first use Fact~\ref{fact:weingarten} to rewrite $\mathbb{E}_{\rho \sim \mathcal{E}'(\mathbb{Z}_2)} \rho^{\otimes k} = \Phi^{(k)}_{\rm Haar}(\Pi_r^{\otimes k} / r^k) = \sum_{\sigma,\tau\in S_k}\textsf{Wg}_{d/2}(\sigma\tau^{-1})\frac{r^{|\tau|}}{r^k}\hat{\sigma}$. Recall that the Haar measure is over the subspace $\bar{X} = +1$, of dimension $d/2$. 
Then, the trace distance can be bounded as
\begin{equation}\label{seq:tr_dis}
    \tracenorm{\sum_{\sigma,\tau \in S_k }\textsf{Wg}_{d/2}(\sigma\tau^{-1})\frac{r^{|\tau|}}{r^k}\hat{\sigma}-\frac{P_0^{\otimes k}}{(d/2)^{k}}}
    \le \left|\sum_{\tau \in S_k }\textsf{Wg}_{d/2}(\tau^{-1})\frac{r^{|\tau|}}{r^k}\(\frac{d}{2}\)^k-1\right| 
    + \tracenorm{\sum_{\substack{ \sigma, \tau\in S_k:\\ \sigma \ne e}}\textsf{Wg}_{d/2}(\sigma\tau^{-1})\frac{r^{|\tau|}}{r^k}\hat{\sigma}},
\end{equation}
where we used the fact that $\hat{e}=P_0^{\otimes k}$ and $\Vert P_0\Vert_{\rm tr}=d/2$. 

The first term of the right hand side of Eq.\eqref{seq:tr_dis} can be further bounded by the sum of $\textsf{Wg}_{d/2}(e)=(2/d)^k[1+O(k^2/d^2)]$ (Fact~\ref{fact:wg_abs_bound}) and
\begin{align}
    \left|\sum_{\tau\in S_k:\ \tau \ne e}\textsf{Wg}_{d/2}\(\tau^{-1}\)\frac{r^{|\tau|}(d/2)^k}{r^k}\right|\le 
    &-1+ \sum_{\tau\in S_k }\left|\textsf{Wg}_{d/2}\(\tau^{-1}\)\right|\frac{r^{|\tau|}(d/2)^k}{r^k} \nonumber\\
    \le& -1 + \sum_{\tau\in S_k }\left(\frac{dr}{8}\right)^{|\tau|-k}\[1+O\(\frac{k^{7/2}}{d^2}\)\]
    =O\left(\frac{k^2}{rd}\right), \label{eq:si_triangle}
\end{align}
where we used Facts~\ref{fact:wg_abs_bound} and~\ref{fact:trace_gram}.
For the second term on the right hand side of Eq.\eqref{seq:tr_dis}, we can use the inequaility $\tracenorm{A}\le\textsf{dim} \cdot \opnorm{A}$ with $\textsf{dim}$ the Hilbert space dimension, and Facts~\ref{fact:wg_abs_bound} and~\ref{fact:trace_gram}, to get
\begin{align}
    \tracenorm{\sum_{\substack{\sigma,\tau\in S_k:\\ \sigma \ne e}}\textsf{Wg}_{d/2}\(\sigma\tau^{-1}\)\frac{r^{|\tau|}}{r^k}\hat{\sigma}}
    \le &\opnorm{ \sum_{\substack{\sigma,\tau\in S_k:\\ \sigma \ne e}} \textsf{Wg}_{d/2}\(\sigma\tau^{-1}\)\frac{r^{|\tau|}\(d/2\)^k}{r^k}\hat{\sigma}} \nonumber\\
    \le&-1 + \sum_{\sigma,\tau \in S_k }\left|\textsf{Wg}_{d/2}\(\sigma\tau^{-1}\)\right|\frac{r^{|\tau|}(d/2)^k}{r^k} \nonumber\\
    \leq& -1 + \sum_{\sigma,\tau \in S_k }\(\frac{d}{8}\)^{\left|\sigma\tau^{-1}\right|-k}r^{|\tau|-k}\[1+O\(\frac{k^{7/2}}{d^2}\)\]
    =O\(\frac{k^2}{r}\).
\end{align}
Plugging the two bounds back in Eq.~\eqref{seq:tr_dis}, we obtain the statement of Lemma~\ref{lem:stat_distance}:
\begin{equation}
    \tracenorm{\mathbb{E}_{\rho\sim \mathcal{E}'(\mathbb{Z}_2) } \rho^{\otimes k} - \rho^{\otimes k}}\le O\left(\frac{k^2}{r}\right).
\end{equation}


\section{Proof of Theorem~\ref{thm:pseudo_swssb_U1}}

Here we prove Theorem~\ref{thm:pseudo_swssb_U1}, asserting pseudo-SWSSB in the ensemble $\mathcal{E}(U(1))$ of Eq.~\eqref{eq:E2_U1_def}. 
Let us briefly summarize some notation. The $U(1)$ symmetry action is $U_g(\theta)=e^{i\theta\hat{Q}}$ with charge operator $\hat{Q}=\sum_i(\mathbb{I}-Z_i)/2$. The Hilbert space is the direct sum of symmetry sectors, $\mathcal{H} = \bigoplus_{Q=0}^N \mathcal{H}_Q$. Each $\mathcal{H}_Q$ is spanned by computational basis states of Hamming weight $Q$ and has dimension $d_Q = \binom{N}{Q}$. To diagnose SWSSB we choose local charged operators $O_i=S_i^+ = (X_i + i Y_i) / 2$. These are not unitary, so the fidelity correlator $F$ is not defined; we use $R_1$ instead. 

First, we show that the states $\rho_Q=P_Q/ d_Q$ (maximally mixed states in a charge sector) exhibit SWSSB when the charge density $\sigma=Q/N$ is asymptotically not 0 or 1. 
Clearly $\rho_Q$ has a strong symmetry: $U_g(\theta) \rho_Q = e^{i\theta Q} \rho_Q$. 
Also, the ordinary correlator $C(i,j) = \Tr(S_i^+S_j^- P_Q/ d_Q)=0$ vanishes by the definition of $ P_Q$ (as seen e.g. by taking the trace in the computational basis), so the weak symmetry is unbroken. 
However, the the R\'enyi-1 correlator is given by
\begin{align}
    R_1(i,j) = \frac{1}{ d_Q}\Tr( P_QS_i^+S_j^- P_QS_j^+S_i^-)
    = \frac{1}{d_Q} \binom{N-2}{Q-1} 
    = \frac{Q(N-Q)}{N(N-1)}
    = \sigma(1-\sigma)+o(1),
\end{align}
where $\binom{N-2}{Q-1}$ is the number of computational basis states with total Hamming weight $Q$, a $|0\rangle$ at site $i$, and a $|1\rangle$ at site $j$. This is a finite constant as long as $\sigma\neq 0,1$, showing spontaneous breaking of the strong symmetry. 

Next, we show that the pseudorandom state ensemble 
\begin{align}
    \mathcal{E}(U(1)) = \left\{\frac{ P_QU\Pi_rU^\dagger P_Q}{\Tr\( P_QU\Pi_rU^\dagger P_Q\)}:\  U\sim \mathcal{U}_{\rm PRU}(\mathcal H) \right\} 
\end{align}
does not break the strong symmetry; see Lemma~\ref{nlem:renyi_corr_small}.

Finally, to show computational indistinguishability of $\mathcal{E}(U(1))$ from $\rho_Q$, we first use the definition of PRUs to replace $\mathcal{E}(U(1))$ by its Haar-random counterpart 
\begin{align}
    \mathcal{E}' (U(1)) = \left\{\frac{ P_QU\Pi_rU^\dagger P_Q}{\Tr\( P_QU\Pi_rU^\dagger P_Q\)}:\  U\sim \mathcal{U}_{\rm Haar} (\mathcal{H}) \right\}
\end{align}
in a computationally undetectable way; then we show that $\mathcal{E}' (U(1))$ is statistically indistinguishable from $\rho_Q$, see Lemma~\ref{nlem:stat_distance}. 
This concludes the proof of pseudo-SWSSB in $\mathcal{E}(U(1))$.

\begin{lemma}[absence of SWSSB in $\mathcal{E}(U(1))$]  \label{nlem:renyi_corr_small}
    The R\'enyi-1 correlator is superpolynomially small with high probability over the pseudorandom ensemble $\mathcal{E}(U(1))$: for all $\delta>0$, we have
    \begin{equation}
        {\sf Prob} \left( |R_1 (i,j)| \geq \delta \right) \leq O \left( \frac{r^{3/2}}{\delta d_Q^{1/4}} \right). 
    \end{equation}
\end{lemma}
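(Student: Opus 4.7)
The plan is to exploit the exact 2-design property of the PFC ensemble to reduce the nonlinear quantity $R_1(i,j)$ to a polynomial function of $U$ that can be averaged in closed form. The central observation, already anticipated in the main text, is that with high probability over Haar-random $U$ the state $\rho = A/T$ (with $A := P_Q U \Pi_r U^\dagger P_Q$ and $T := \Tr A$) is extremely close to a rank-$r$ projector divided by $r$. In that regime one has $\sqrt{\rho} \approx \sqrt{r}\,\rho$, which turns $R_1(i,j)$ into the quadratic-in-$\rho$ quantity $r\,\Tr(\rho X \rho X^\dagger)$ (with $X := O_i O_j^\dagger$) up to a controlled error.

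The argument will proceed in three steps. First, the exact 2-design property yields the moments $\mathbb{E}[T]$, $\mathbb{E}[T^2]$, and $\mathbb{E}[\Tr(A^2)]$ in closed form; Chebyshev's inequality then shows that $T$ concentrates around $\mathbb{E}[T] = r d_Q/d$, and the nonnegative ``excess purity'' $\Tr(\rho^2) - 1/r$ has expectation of order $1/d_Q$ once this concentration is used, hence is small with high probability by Markov. Second, the elementary bound $(\sqrt{a}-\sqrt{b})^2 \le |a-b|$ combined with Cauchy-Schwarz on the rank-$r$ support of $\rho$ converts the excess-purity estimate into a Frobenius-norm bound $\Vert \sqrt{\rho} - \Pi_\rho/\sqrt{r}\Vert_F \le O(r^{1/4})(\Tr(\rho^2)-1/r)^{1/4}$, where $\Pi_\rho$ denotes the projector onto the range of $\rho$. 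Expanding $R_1(i,j)$ around $\sqrt{\rho} = \Pi_\rho/\sqrt{r}$ and using Cauchy-Schwarz in the Hilbert-Schmidt inner product then bounds $|R_1(i,j)|$ by $\frac{1}{r}\Tr(\Pi_\rho X \Pi_\rho X^\dagger)$ plus a remainder of the same order. Third, the projector term is itself turned into a 2-design quantity by approximating $\Pi_\rho \approx (d/d_Q) A$, whose Frobenius-norm error is also controlled by the same purity estimate; the leading expression then becomes proportional to $\mathbb{E}[\Tr(A X A X^\dagger)]$, which is evaluated via Fact~\ref{fact:weingarten}. The swap-permutation contribution vanishes because $\Tr(P_Q X) = 0$ (since $X = S_i^+ S_j^-$ is off-diagonal in the charge basis), leaving only the identity-permutation contribution of order $r/d_Q$. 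Applying Jensen's inequality for the concave fourth root, followed by Markov's inequality for the nonnegative $|R_1(i,j)|$, then yields the claimed probability bound.

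The main technical obstacle will be handling simultaneously the random denominator $T$ in $\rho = A/T$ and the non-polynomial square root $\sqrt{\rho}$, neither of which is a 2-design quantity on its own. Both will be tamed by the same purity-concentration estimate, which crucially relies on the PFC ensemble being an \emph{exact} 2-design (so that no approximation error propagates through the ratio of moments). The fractional-power loss visible in the final bound $r^{3/2}/(\delta d_Q^{1/4})$ can be traced to Jensen's inequality applied to the concave fourth root of the excess purity, reflecting the intrinsically non-polynomial nature of the square-root function.
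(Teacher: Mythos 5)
Your proposal is correct in outline and follows essentially the same route as the paper's proof: use the exact 2-design to show the purity of $\rho$ is concentrated at $1/r$, approximate $\sqrt{\rho}$ by a rescaled (near-)projector so that $R_1$ reduces to a quantity quadratic in $U$, evaluate that term by Weingarten calculus using $\Tr(P_Q O_iO_j^\dagger)=0$, and finish with Markov's inequality. The only differences are technical bookkeeping choices (Frobenius norm with the range projector and a Chebyshev bound on the normalization $\Tr(P_QU\Pi_rU^\dagger)$, versus the paper's trace-norm bound on $\sqrt{\rho}-\sqrt{r}\,\tilde{\rho}$ via eigenvalue deviations of the unnormalized state and an event-splitting argument), which affect the precise exponents of $r$ and $d_Q$ but not the superpolynomial smallness asserted by the lemma.
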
 
\begin{proof}
We will first show that the states $\rho \in \mathcal{E}(U(1))$ are extremely close to rank-$r$ projectors (up to the trace normalization) with high probability; that will let us reduce the expression for $R_1$ to a quadratic polynomial in $U,U^\dagger$, where we can use the 2-design property\cite{ambainis_quantum_2007,dankert_exact_2009,roberts_chaos_2017} to switch from the pseudorandom ensemble $\mathcal{U}_{\rm PRU}$ to the Haar-random ensemble and conclude the proof.

Let us first define the operators $\tilde{\rho} = \frac{d}{d_Q r} P_Q U \Pi_r U^\dagger P_Q$. These are un-normalized ``states'' of rank $\leq r$ ($\Pi_r$ has rank $r$ by definition, and conjugation by $P_Q$ cannot increase the rank). 
It is straightforward to show, using the 2-design property of $\mathcal{U}_{\rm PRU}$ and Weingarten calculus, that 
\begin{equation}
    \mathbb{E}_{U \sim \mathcal{U}_{\rm PRU}} {\rm Tr}(\tilde{\rho}) = 1,
    \quad 
    \mathbb{E}_{U \sim \mathcal{U}_{\rm PRU}} {\rm Tr}(\tilde{\rho}^2) = \frac{1}{r} + O\left( \frac{1}{d_Q}\right).  
    \label{eq:tr_rhotilde_avg}
\end{equation}
Thus the $\tilde{\rho}$ ``states'' are normalized on average and nearly maximally mixed on their rank-$r$ support. 
Specifically, let us write the (nonzero) eigenvalues of $\tilde{\rho}$ as $\{ 1/r +  \delta_i\}_{i=1}^r$; pluggint these into Eq.~\eqref{eq:tr_rhotilde_avg} we see that the $\delta_i$ satisfy
\begin{equation}
    \sum_i \mathbb{E}_{U \sim \mathcal{U}_{\rm PRU}} [\delta_i] = 0,
    \quad 
    \sum_i \mathbb{E}_{U \sim \mathcal{U}_{\rm PRU}} [\delta_i^2] = O(1/d_Q).
\end{equation}
By Markov's inequality, we have $\textsf{Pr}\left((\sum_i\delta_i^2) > c\right) \leq O(1/(cd_Q))$, and choosing e.g. $c = 1/d_Q^{1/2}$ we have that $\sum_i\delta_i^2\leq O\(1/d_Q^{1/2}\)$ with probability at least $1-O\(1/d_Q^{1/2}\)$.

As a consequence, when $\sum_i\delta_i^2\leq O(1/d_Q^{1/2})$, $\sqrt{\rho}$ is extremely close to $\sqrt{r} \tilde{\rho}$: recalling that $\rho = \tilde{\rho} / {\rm Tr}(\tilde \rho)$, we have
\begin{align}
    \left\Vert \sqrt{\rho} - \sqrt{r} \tilde{\rho} \right\Vert_{\rm tr}
    & = \sum_{i=1}^r \left|\left( {\frac{1 + r\delta_i}{\sum_{j=1}^r (1+r\delta_j)}}\right)^{1/2}  - \frac{1+r\delta_i}{\sqrt{r}} \right| \nonumber \\
    & \leq  \sum_{i=1}^r \frac{1}{\sqrt r} \left(\left| \left( {\frac{1 + r\delta_i}{1+\sum_{j=1}^r \delta_j}}\right)^{1/2}  - 1\right| + r|\delta_i|\right) \nonumber \\
    & \leq 2\sqrt{r} \sum_{i=1}^r |\delta_i|\leq2r \left( \sum_i\delta_i^2 \right)^{1/2}
    \leq O\left(\frac{r}{d_Q^{1/4}}\right).\label{eq:bound_sqrtrho}
\end{align} 
where we have used $\sum_i \delta_i \leq \sum_i |\delta_i| \leq \sqrt{r \sum_i\delta_i^2}$ (by Cauchy-Schwarz). 
Thus the R\'enyi-1 correlator simplifies: letting $B = O_i O_j^\dagger$, we have
\begin{align}
    R_1(i,j) 
    & = {\rm Tr}(\sqrt{\rho} B \sqrt{\rho} B^\dagger ) \nonumber \\
    & = \sqrt{r} {\rm Tr}(\tilde{\rho}B \sqrt{\rho} B) + {\rm Tr}((\sqrt{\rho} - \sqrt{r} \tilde{\rho}) B \sqrt{\rho} B^\dagger ) 
    \nonumber \\
    & \leq \sqrt{r} {\rm Tr}(\tilde{\rho}B \sqrt{\rho} B)  + \| \sqrt{\rho} - \sqrt{r} \tilde{\rho}\|_{\rm tr} \|  B \sqrt{\rho} B^\dagger \|_{\rm op} \nonumber \\ 
    & \leq \sqrt{r} {\rm Tr}(\tilde{\rho}B \sqrt{\rho} B)  + O\left( r/d_Q^{1/4} \right),
\end{align}
where we used the inequality $\| AB\|_{\rm tr} \leq \|A\|_{\rm tr} \|B\|_{\rm op}$, the bound from Eq.~\eqref{eq:bound_sqrtrho}, and the fact that $\| B\sqrt{\rho}B^\dagger\|_{\rm op} \leq \|B\|_{\rm op}^2 \|\rho\|_{\rm op}^{1/2} \leq 1$. 
Iterating the same argument on the remaining $\sqrt{\rho}$ operator, one gets
\begin{align}
    R_1(i,j) 
    & \leq r {\rm Tr}(\tilde{\rho} B \tilde\rho B^\dagger) 
    + \sqrt{r} \| \sqrt{\rho} - \sqrt{r} \tilde{\rho}\|_{\rm tr} \|B\rho B^\dagger \|_{\rm op} {\rm Tr}(\tilde{\rho})  
    + O\left( r/d_Q^{1/4}\right)  \nonumber \\
    & \leq r {\rm Tr}(\tilde{\rho} B \tilde\rho B^\dagger) 
    + {\rm Tr}(\tilde\rho) O\left( r^{3/2}/d_Q^{1/4}\right), 
\end{align}
where we also used $\tilde\rho = \rho {\rm Tr}(\tilde\rho)$. 
Finally we can average over ${U \sim \mathcal{U}_{\rm PRU}}$, breaking down the average between elements where $\sum_i \delta_i^2 < c/d_Q^{1/2}$ and the rest. Calling the first event $\mathcal{A}$ and its complement $\bar{\mathcal A}$, we have
\begin{align}
    \mathbb{E}_{U \sim \mathcal{U}_{\rm PRU}}     R_1(i,j)
    & = \textsf{Pr}(\mathcal{A})\mathbb{E}_{U\sim\mathcal{U}_{PRU}|\mathcal A }\left[R_1(i,j)\right]
    +\textsf{Pr}(\bar{\mathcal{A}})\mathbb{E}_{U\sim \mathcal{U}_{PRU}|\bar{\mathcal A}} \left[R_1(i,j)\right]\nonumber\\
    & \leq r \mathbb{E}_{U \sim \mathcal{U}_{\rm PRU}}[{\rm Tr}(\tilde{\rho} O_i O_j^\dagger \tilde{\rho} O_i^\dagger O_j)] + O\left( \frac{r^{3/2}}{d_Q^{1/4}} \right)+P(\bar{\mathcal{A}})\nonumber\\
    &\le \mathbb{E}_{U \sim \mathcal{U}_{\rm PRU}}[{\rm Tr}(\tilde{\rho} O_i O_j^\dagger \tilde{\rho} O_i^\dagger O_j)]+ O\left( \frac{r^{3/2}}{d_Q^{1/4}} \right)+O\left(\frac{1}{d_Q^{1/2}}\right) , 
    \label{eq:bounding_r1_pru}
\end{align}
Here we have used the facts that $\textsf{Pr}(\mathcal{A})\mathbb{E}_{|\mathcal{A}} [R_1]\le \mathbb{E}[R_1]$ since $R_1$ is always non-negative and $R_1\le 1$, along with the fact that $\textsf{Pr}(\bar{\mathcal A}) \leq O(1/d_Q^{1/2})$.
At this point, since the quantity to average is quadratic in $U$ and $\mathcal{U}_{\rm PRU}$ forms an exact 2-design, we may replace the PRU ensemble by the Haar ensemble:
\begin{align}
    \mathbb{E}_{U \sim \mathcal{U}_{\rm PRU}}     R_1(i,j) 
    & \leq r \frac{d^2}{d_Q^2} \mathbb{E}_{U \sim \mathcal{U}_{\rm Haar}}\left[{\rm Tr}\left(P_Q U \frac{\Pi_r}{r} U^\dagger P_Q O_i O_j^\dagger P_Q U \frac{\Pi_r}{r} U^\dagger P_Q O_i^\dagger O_j\right)\right] + O\left( \frac{r^{3/2}}{d_Q^{1/4}} \right) \nonumber \\
    & \leq r \frac{d^2}{d_Q^2} {\rm Tr}\left[ \Phi^{(2)}_{\rm Haar} \left( \frac{\Pi_r^{\otimes 2}}{r^2} \right) (P_Q O_i O_j^\dagger P_Q)\otimes(P_Q O_i^\dagger O_j P_Q) \hat\pi \right] + O\left( \frac{r^{3/2}}{d_Q^{1/4}} \right) \nonumber \\
    & \leq \frac{r}{d_Q^2} \frac{d^2}{d^2-1} \left( 1-\frac{1}{rd}\right) {\rm Tr}(P_Q O_i^\dagger O_j P_Q O_i O_j^\dagger ) + O\left( \frac{r^{3/2}}{d_Q^{1/4}} \right) \nonumber \\
    & \leq \frac{r}{d_Q} + O\left( \frac{r^{3/2}}{d_Q^{1/4}} \right) \leq O\left( \frac{r^{3/2}}{d_Q^{1/4}} \right).
\end{align}
Here we used Fact~\ref{fact:weingarten} to expand the Haar twirling channel. We then used the fact that ${\rm Tr}(P_Q O_i O_j^\dagger) = 0$ and that ${\rm Tr}(P_Q O_i O_j^\dagger P_Q O_i^\dagger O_j) \leq d_Q$, as seen e.g. by taking the trace in the computational basis. 
Finally, Markov's inequality yields the result. 
\end{proof}

\begin{lemma}[Statistical indistinguishability between $\rho_Q$ and $\mathcal{E}'(U(1))$] \label{nlem:stat_distance}
    \begin{align}
        \left\Vert\mathbb{E}_{\rho\sim \mathcal{E}'(U(1))} \rho^{\otimes k} -  \rho_Q^{\otimes k} \right\Vert_{\rm tr}\leq O\(\frac{k^{2}}{r}\).
    \end{align}
\end{lemma}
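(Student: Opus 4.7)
The proof plan parallels that of Lemma~\ref{lem:stat_distance}, with one new complication: each $\rho\in\mathcal{E}'(U(1))$ carries a $U$-dependent normalization factor $\Tr(P_Q U \Pi_r U^\dagger P_Q)$ in its denominator, whereas in the $\mathbb{Z}_2$ case the denominator was the constant $r$. To handle this, I would reuse the unnormalized operator $\tilde\rho := \frac{d}{r d_Q} P_Q U \Pi_r U^\dagger P_Q$ introduced in the proof of Lemma~\ref{nlem:renyi_corr_small}, and set $\alpha := \Tr(\tilde\rho)$, so that $\rho = \tilde\rho/\alpha$. A two-copy Weingarten computation (as in Eq.~\eqref{eq:tr_rhotilde_avg}) yields $\mathbb{E}[\alpha] = 1$ and $\mathrm{Var}(\alpha) = O(1/(r d_Q))$. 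Under the theorem's hypotheses $|Q-N/2|\leq O(N^{1/2})$ and $r<e^{o(N)}$, one has $d_Q \geq \Omega(d/\sqrt{N}) \gg r$, so $\alpha$ fluctuates by much less than $1/r$.

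With this setup, I would decompose the error as
\begin{align}
\mathbb{E}[\rho^{\otimes k}] - \rho_Q^{\otimes k}
= \mathbb{E}\bigl[\tilde\rho^{\otimes k}(\alpha^{-k}-1)\bigr]
+ \bigl(\mathbb{E}[\tilde\rho^{\otimes k}] - \rho_Q^{\otimes k}\bigr),
\end{align}
and bound the two terms separately. The second term is handled by a direct adaptation of the $\mathbb{Z}_2$ argument. Fact~\ref{fact:weingarten} gives $\mathbb{E}[\tilde\rho^{\otimes k}] = (d/(r d_Q))^k \sum_{\sigma,\tau\in S_k} \Wg_d(\sigma\tau^{-1})\, r^{|\tau|}\, P_Q^{\otimes k}\hat\sigma$, where the key identity $\hat\sigma\, P_Q^{\otimes k} = P_Q^{\otimes k}\,\hat\sigma$ holds because permuting the $k$ Hilbert-space copies preserves the property of each copy lying in $\mathcal H_Q$. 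I would then split the sum into $\sigma = e$ and $\sigma\ne e$ contributions, using Facts~\ref{fact:wg_abs_bound} and~\ref{fact:trace_gram} exactly as in Section~\ref{ssec:proof_lem3}, to obtain a bound of $O(k^2/r)$ on the trace norm of this term.

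For the first term, positivity of $\tilde\rho$ gives $\|\tilde\rho^{\otimes k}\|_{\rm tr} = \alpha^k$, so its trace norm is bounded by $\mathbb{E}\bigl[|1 - \alpha^k|\bigr]$. I would then split the expectation according to a good event $\{|\alpha - 1|\leq \eta\}$ and its complement: on the good event $|1 - \alpha^k|\leq O(k\eta)$ for $k\eta\ll 1$, while on the complement one combines Markov's inequality on higher moments of $\alpha$ with the deterministic bound $\alpha\leq d/d_Q$ (which follows from $\tilde\rho$ being a positive operator of rank $\leq r$ with operator norm $\leq d/(r d_Q)$). Optimizing $\eta$ yields a bound of $O(k^2/\sqrt{r d_Q})$, which is subleading to $O(k^2/r)$ in the regime $d_Q \gg r$ enforced by the theorem's hypotheses.

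The main obstacle I anticipate is rigorously controlling the higher moments $\mathbb{E}[\alpha^{2m}]$ needed in the tail bound for the first term. Each such moment can be expanded via Fact~\ref{fact:weingarten} into a double sum over $S_{2m}\times S_{2m}$, and extracting the leading behavior requires carefully isolating the near-identity permutations and showing that higher-cycle contributions are suppressed by negative powers of $d_Q$. Given the favorable hierarchy $r\ll d_Q$, this tail piece will ultimately be dominated by the $O(k^2/r)$ bound from the Weingarten-only term, matching the statement of the lemma.
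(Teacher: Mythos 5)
Your proposal is correct and follows essentially the same route as the paper: introduce the unnormalized $\tilde\rho=\frac{d}{rd_Q}P_Q U\Pi_r U^\dagger P_Q$, bound the purely Weingarten piece $\mathbb{E}[\tilde\rho^{\otimes k}]-\rho_Q^{\otimes k}$ exactly as in the $\mathbb{Z}_2$ case (your commutation $\hat\sigma P_Q^{\otimes k}=P_Q^{\otimes k}\hat\sigma$ is valid, and this piece indeed carries the dominant $O(k^2/r)$), and control the $U$-dependent normalization through moments of $\alpha=\Tr\tilde\rho$; your variance estimate $O(1/(rd_Q))$ is also right. The only substantive difference is how the normalization-fluctuation term is handled: the paper compares $\mathbb{E}[\rho^{\otimes k}]$ to $\mathbb{E}[\tilde\rho^{\otimes k}]/\mathbb{E}[(\Tr\tilde\rho)^k]$ and bounds the fluctuation directly by Jensen/Cauchy--Schwarz, $\mathbb{E}\left|1-\alpha^k/f(k)\right|\le\sqrt{f(2k)/f(k)^2-1}$ with $f(m)=\mathbb{E}[\alpha^m]$ evaluated by the same Weingarten sum, which avoids any good/bad-event split. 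Be aware that your split, as literally stated, cannot rest on the variance plus the deterministic bound $\alpha\le d/d_Q$: on the bad event $\alpha^k$ can be as large as $(d/d_Q)^k=N^{\Theta(k)}$, which overwhelms a Chebyshev tail $O(1/(rd_Q\eta^2))$ once $k$ is a large polynomial, so you genuinely need moments of order $\sim k$ (e.g.\ $\mathbb{E}[\alpha^{2k}]$) -- precisely the quantity the paper computes. Since that computation is no harder than the one you already invoke for the second term, your flagged obstacle is resolvable, but the cleaner execution is to drop the event split and use the one-line Cauchy--Schwarz bound $\mathbb{E}|1-\alpha^k|\le\sqrt{f(2k)-2f(k)+1}$, which is subleading to $O(k^2/r)$ in the stated parameter regime.
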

\begin{proof}
    We again introduce the unnormalized density matrix $\Tilde{\rho}:= \frac{d}{rd_Q} P_QU\Pi_rU^\dagger P_Q$ and recall that $\rho_Q = P_Q / d_Q$. 
    Since $\rho = \Tilde{\rho} / {\rm Tr}(\Tilde{\rho})$, by the triangle inequality we have 
    \begin{align}
        \left\Vert\mathbb{E}_U\frac{\Tilde{\rho}^{\otimes k}}{\(\Tr\Tilde{\rho}\)^k}-\frac{ P_Q^{\otimes k}}{ d_Q^k}\right\Vert_{\rm tr}\leq&\left\Vert\mathbb{E}_U\frac{\Tilde{\rho}^{\otimes k}}{\(\Tr\Tilde{\rho}\)^k}-\frac{\mathbb{E}_U\Tilde{\rho}^{\otimes k}}{\mathbb{E}_U\(\Tr\Tilde{\rho}\)^k}\right\Vert_{\rm tr} 
        + \left\Vert\frac{\mathbb{E}_U\Tilde{\rho}^{\otimes k}}{\mathbb{E}_U\(\Tr\Tilde{\rho}\)^k}-\frac{ P_Q^{\otimes k}}{ d_Q^k}\right\Vert_{\rm tr}. \label{eq:tracedistance_decomposition}
    \end{align}
    where $\mathbb{E}_{U}$ stands for $\mathbb{E}_{U\sim \mathcal{U}_{\rm Haar}}$ in the following.

    For the first term in the right hand side of Eq.~\eqref{eq:tracedistance_decomposition}, we have
    \begin{align}
        \left\Vert\mathbb{E}_U\frac{\Tilde{\rho}^{\otimes k}}{\(\Tr\Tilde{\rho}\)^k}-\frac{\mathbb{E}_U\Tilde{\rho}^{\otimes k}}{\mathbb{E}_U\(\Tr\Tilde{\rho}\)^k}\right\Vert_{\rm tr} 
        & \leq \left\Vert \mathbb{E}_U \left[ \rho^{\otimes k} \left( 1 - \frac{( {\rm Tr}\tilde\rho)^k}{\mathbb{E}_{U'} ({\rm Tr}\tilde\rho)^k} \right) \right] \right\Vert_{\rm tr}  \nonumber \\
        & \leq
        \mathbb{E}_U\left|1-\frac{\(\Tr\Tilde{\rho}\)^{k}}{\mathbb{E}_{U'} \(\Tr\Tilde{\rho}\)^k}\right|\nonumber\\
        & \leq \sqrt{\frac{\mathbb{E}_U\(\Tr\Tilde{\rho}\)^{2k}}{\(\mathbb{E}_U\(\Tr\Tilde{\rho}\)^k\)^2}-1},
    \end{align}
    where we have used Jensen's inequality for the function $\sqrt{x}$.
    Now we define $f(k):=\mathbb{E}_U\( \Tr \Tilde{\rho} \)^{k}$, 
    so that the above bound is given by $\sqrt{f(2k)/f(k)^2-1}$. Using Fact~\ref{fact:weingarten}, we have
    \begin{align}
        f(k)
        & = \left( \frac{d}{rd_Q} \right)^k \sum_{\sigma,\tau\in S_k}\Wg_d(\sigma\tau^{-1})r^{|\tau|} d_Q^{|\sigma|}\nonumber\\
        & = 1+O\(\frac{k^2}{d^2}\)+ \sum_{\substack{\sigma,\tau\in S_k: \\ (\sigma,\tau)\neq (e,e)} } d^k\Wg_{d}(\sigma\tau^{-1})r^{|\tau|-k} d_Q^{|\sigma|-k}.
    \end{align}
    We then bound the summation over $(\sigma, \tau)\neq (e,e)$ by
    \begin{align}
        \left|\sum_{\substack{\sigma,\tau\in S_k: \\ (\sigma,\tau)\neq (e,e)} }d^k\Wg_{d}(\sigma\tau^{-1})r^{|\tau|-k} d_Q^{|\sigma|-k}\right|
        & \leq
        \sum_{\sigma,\tau \in S_k }d^k\left|\Wg_{d}(\sigma\tau^{-1})\right|r^{|\tau|-k} d_Q^{|\sigma|-k}-1\nonumber\\
        & \leq \sum_{\sigma,\tau \in S_k }\(\frac{d}{4}\)^{|\sigma\tau^{-1}|-k}r^{|\tau|-k} d_Q^{|\sigma|-k}\[1+O\(\frac{k^{7/2}}{d^2}\)\]-1\nonumber\\
        & \leq \sum_{\sigma,\tau \in S_k }\(\frac{d}{4}\)^{|\sigma\tau^{-1}|-k} d_Q^{|\sigma|-k}\[1+O\(\frac{k^{7/2}}{d^2}\)\]-1\nonumber\\
        & \leq O\(\frac{k^{2}}{ d_Q}\),
    \end{align}
    where we have used Facts~\ref{fact:wg_abs_bound} and~\ref{fact:trace_gram}. 
    In all, we have $f(k)=1+O\(\frac{k^{2}}{ d_Q}\)$ and thus
    \begin{align}
        \sqrt{\frac{f(2k)}{f(k)^2}-1}=O\(\frac{k^{2}}{ d_Q}\),
    \end{align}
    which provides a negligible upper bound on the first term in Eq.~\eqref{eq:tracedistance_decomposition}. 

    Moving on to the second term in Eq.~\eqref{eq:tracedistance_decomposition}, we have
    \begin{align}
        \left\Vert\frac{\mathbb{E}_U\Tilde{\rho}^{\otimes k}}{\mathbb{E}_U\(\Tr\Tilde{\rho}\)^k}-\frac{ P_Q^{\otimes k}}{ d_Q^k}\right\Vert_{\rm tr}
        & = \left\Vert\frac{\sum_{\sigma,\tau\in S_k}d^k \Wg_d(\sigma\tau^{-1})r^{|\tau|-k} P_Q^{\otimes k} \hat{\sigma} P_Q^{\otimes k}}{d_Q^k f(k)}-\frac{ P_Q^{\otimes k}}{ d_Q^k}\right\Vert_{\rm tr}\nonumber\\
        &\leq  \left|\frac{ \sum_{\tau \in S_k }d^k \Wg_d\(\tau\)r^{|\tau|-k}}{f(k)}-1\right|+\left\Vert\frac{\sum_{\sigma, \tau\in S_k: \ \sigma\neq e}d^k \Wg_d(\sigma\tau^{-1})r^{|\tau|-k} P_Q^{\otimes k} \hat{\sigma} P_Q^{\otimes k}}{d_Q^k f(k)}\right\Vert_{\rm tr},
    \end{align}
    where we have separated the $\sigma=e$ term from the $\sigma\neq e$ terms. The former is bounded by
    \begin{align}
        \left|\frac{\sum_{\tau\in S_k }d^k \Wg_d\(\tau\)r^{|\tau|-k}}{f(k)}-1\right|
        & \leq \left|\frac{ \sum_{\tau\neq e} d^k \Wg_d\(\tau\)r^{|\tau|-k}}{f(k)}\right|+\left|\frac{ d^k\Wg_d\(e\)}{f(k)}-1\right|\nonumber\\
        & \leq \frac{ d^k\sum_{\tau\neq e}\left|\Wg_d\(\tau\)\right|r^{|\tau|-k}}{f(k)}+O\(\frac{k^{2}}{ d_Q}\)\nonumber\\
        & \leq \frac{1}{f(k)}\sum_{\tau\neq e}\(\frac{rd}{4}\)^{|\tau|-k}\[1+O\(\frac{k^{7/2}}{d^2}\)\]+O\(\frac{k^{2}}{ d_Q}\)\nonumber\\
        & \leq O\(\frac{k^{2}}{ d_Q}\),
    \end{align}
    where we have used Facts~\ref{fact:wg_abs_bound} and~\ref{fact:trace_gram};
    the latter is bounded by
    \begin{align}
        \left\Vert\frac{1}{d_Q^k f(k)} \sum_{\substack{ \sigma,\tau \in S_k:\\ \sigma \neq e}} d^k \Wg_d(\sigma\tau^{-1})r^{|\tau|-k} P_Q^{\otimes k} \hat{\sigma} P_Q^{\otimes k} \right\Vert_{\rm tr}
        & \leq \frac{d^k}{d_Q^k f(k)} \sum_{\substack{ \sigma,\tau \in S_k:\\ \sigma \neq e}} \left|\Wg_d(\sigma\tau^{-1})\right|r^{|\tau|-k}\left\Vert P_Q^{\otimes k} \hat{\sigma} P_Q^{\otimes k}\right\Vert_{\rm tr}\nonumber\\
        & \leq \frac{d^k}{d_Q^k f(k)} \sum_{\substack{ \sigma,\tau \in S_k:\\ \sigma \neq e}} \left|\Wg_d(\sigma\tau^{-1})\right|r^{|\tau|-k}\left\Vert P_Q^{\otimes k} \hat{\sigma} P_Q^{\otimes k}\right\Vert_{\rm op} d_Q^k \nonumber\\
        & \leq \frac{d^k}{f(k)} \sum_{\substack{ \sigma,\tau \in S_k:\\ \sigma \neq e}} \left|\Wg_d(\sigma\tau^{-1})\right|r^{|\tau|-k} \nonumber\\
        & \leq \frac{1}{f(k)}\[ \sum_{\substack{ \sigma,\tau \in S_k:\\ \sigma \neq e}} \(\frac{d}{4}\)^{|\sigma\tau^{-1}|-k}r^{|\tau|-k}\]\[1+O\(\frac{k^{7/2}}{d^2}\)\]\nonumber\\
        & \leq O\(\frac{k^{2}}{r}\),
    \end{align}
    where we used Facts~\ref{fact:wg_abs_bound} and~\ref{fact:trace_gram} and inequality $\Vert AB\Vert_{\rm tr}\leq\Vert A\Vert_{\rm op}\Vert B\Vert_{\rm tr}$. 
    Combining all above contributions, we find that the leading error term is $O(k^2/r)$, thus completing the proof. 
\end{proof}

\end{document}